\tikzset{curve/.style={settings={#1},to path={(\tikztostart)
			.. controls ($(\tikztostart)!\pv{pos}!(\tikztotarget)!\pv{height}!270:(\tikztotarget)$)
			and ($(\tikztostart)!1-\pv{pos}!(\tikztotarget)!\pv{height}!270:(\tikztotarget)$)
			.. (\tikztotarget)\tikztonodes}},
	settings/.code={\tikzset{quiver/.cd,#1}
		\def\pv##1{\pgfkeysvalueof{/tikz/quiver/##1}}},
	quiver/.cd,pos/.initial=0.35,height/.initial=0}
\tikzset{tail reversed/.code={\pgfsetarrowsstart{tikzcd to}}}
\tikzset{2tail/.code={\pgfsetarrowsstart{Implies[reversed]}}}
\tikzset{2tail reversed/.code={\pgfsetarrowsstart{Implies}}}
\tikzset{no body/.style={/tikz/dash pattern=on 0 off 1mm}}
\definecolor{RED}{HTML}{9B1B30}
\newcommand{\linkLEM}[1]{\textcolor{RED}{{\hypersetup{linkcolor=RED}\hyperref[defs_lem]{#1}}}}
\newcommand{\linkAC}[1]{\textcolor{teal}{{\hypersetup{linkcolor=teal}\hyperref[defs_ac]{#1}}}}
\newcommand{\linkBLEM}[1]{\textcolor{RED}{{\hypersetup{linkcolor=RED}\hyperref[defs_blem]{#1}}}}
\newcommand{\linkBAC}[1]{\textcolor{teal}{{\hypersetup{linkcolor=teal}\hyperref[defs_bac]{#1}}}}
\newcommand{\LS}{\mathsf{DLS}}
\newcommand{\DLS}{\mathsf{DLS}}
\newcommand{\OAC}{\linkAC{\mathsf{OAC}}}
\newcommand{\AC}{\linkAC{\mathsf{AC}}}
\newcommand{\DC}{\linkAC{\mathsf{DC}}}
\newcommand{\CC}{\linkAC{\mathsf{CC}}}
\newcommand{\OBDC}{\linkBAC{\mathsf{OBDC}}}
\newcommand{\BDC}{\linkBAC{\mathsf{BDC}}}
\newcommand{\DDC}{\linkBAC{\mathsf{DDC}}}
\newcommand{\BCC}{\linkBAC{\mathsf{BCC}}}
\newcommand{\BAC}{\linkBAC{\mathsf{BAC}}}
\newcommand{\DP}{\linkLEM{\mathsf{DP}}}
\newcommand{\DDP}{\linkLEM{\mathsf{EP}}}
\newcommand{\BDP}{\linkBLEM{\mathsf{BDP}}}
\newcommand{\BDDP}{\linkBLEM{\mathsf{BEP}}}
\newcommand{\LEM}{\linkLEM{\mathsf{LEM}}}
\newcommand{\LPO}{\linkLEM{\mathsf{LPO}}}
\newcommand{\MP}{\linkLEM{\mathsf{MP}}}
\newcommand{\KS}{\linkLEM{\mathsf{KS}}}
\newcommand{\GKS}{\linkBLEM{\mathsf{GKS}}}
\newcommand{\GMP}{\linkBLEM{\mathsf{GMP}}}
\newcommand{\IP}{\linkLEM{\mathsf{IP}}}
\newcommand{\BIP}{\linkBLEM{\mathsf{BIP}}}
\newcommand{\M}{\mathcal{M}}
\newcommand{\N}{\mathcal{N}}
\newcommand{\Prop}{\mathbb{P}}
\newcommand{\Nat}{\mathbb{N}}
\newcommand{\Bool}{\mathbb{B}}
\newcommand{\Unit}{\mathbbm{1}}
\newcommand{\Term}{\mathbb{T}}
\renewcommand{\Form}{\mathbb{F}}
\newcommand{\btrue}{\mathsf{true}}
\newcommand{\bfalse}{\mathsf{false}}
\renewcommand{\phi}{\varphi}
\newcommand{\sto}{{\rightarrow}}
\newcommand{\total}[1]{\mathsf{tot}(#1)}
\newcommand{\directed}[1]{\mathsf{dir}(#1)}
\newcommand{\Th}[1]{\mathsf{Th}(#1)}
\newcommand{\x}{\mathsf{x}}
\newcommand{\arity}[1]{\mathalpha{|{#1}|}}
\Crefname{theorem}{Theorem}{Theorems}\Crefname{theoremAux}{Theorem}{Theorems}
\Crefname{lemma}{Lemma}{Lemmas}\Crefname{lemmaAux}{Lemma}{Lemmas}
\Crefname{coqfact}{Fact}{Facts}\Crefname{coqfactAux}{Fact}{Facts}
\Crefname{corollary}{Corollary}{Corollaries}\Crefname{corollaryAux}{Corollary}{Corollaries}
\Crefname{definition}{Definition}{Definitions}\Crefname{definitionAux}{Definition}{Definitions}
\Crefname{dummy}{Definition}{Definitions}
\crefname{dummy}{Definition}{Definitions}
\crefname{theorem}{Theorem}{Theorems}\crefname{theoremAux}{Theorem}{Theorems}
\crefname{lemma}{Lemma}{Lemmas}\crefname{lemmaAux}{Lemma}{Lemmas}
\crefname{coqfact}{Fact}{Facts}\crefname{coqfactAux}{Fact}{Facts}
\crefname{corollary}{Corollary}{Corollaries}\crefname{corollaryAux}{Corollary}{Corollaries}
\crefname{definition}{Definition}{Definitions}\crefname{definitionAux}{Definition}{Definitions}
\newcommand{\origin}{https://dominik-kirst.github.io/bdp/}
\begin{document}

\title[Blurred Drinker Paradoxes and Blurred Choice Axioms]{Blurred Drinker Paradoxes and Blurred Choice Axioms: Constructive Reverse Mathematics of the\texorpdfstring{\\}{}Downward Löwenheim-Skolem Theorem}
\thanks{The first author received funding from the European Union’s Horizon research and innovation programme under the Marie Skłodowska-Curie grant agreement No.101152583 and a Minerva Fellowship of the Minerva Stiftung Gesellschaft für die Forschung mbH}

\author{Dominik Kirst\lmcsorcid{0000-0003-4126-6975}}[a,b]
\author{Haoyi Zeng\lmcsorcid{0009-0007-2506-3787}}[c]
\address{Inria Paris, France}
\email{dominik.kirst@inria.fr}
\address{Ben-Gurion University, Israel}
\address{Saarland University, Germany}
\email{haoyi.per@gmail.com}

\begin{abstract}
	In the setting of constructive reverse mathematics, we analyse the downward Löwenheim-Skolem (DLS) theorem of first-order logic, stating that every infinite model has a countable elementary submodel.
	Refining the well-known equivalence of the DLS theorem to the axiom of dependent choice (DC) over classical base theories, our constructive approach allows for several finer logical decompositions:
	Just assuming countable choice (CC), the DLS theorem is equivalent to the conjunction of DC with a newly identified fragment of the excluded middle (LEM) that we call the blurred drinker paradox (BDP).
	Further without CC, the DLS theorem is equivalent to the conjunction of BDP with similarly blurred weakenings of DC and CC.
	Independently of their connection with the DLS theorem, we also study BDP and the blurred choice axioms on their own, for instance by showing that BDP is LEM without a contribution of Markov's principle and that blurred DC is DC without a contribution of CC.
	The paper is hyperlinked with an accompanying Coq development.
\end{abstract}

\keywords{Constructive reverse mathematics, Drinker paradox, Dependent choice, Löwenheim-Skolem theorem, Coq}

\maketitle

\section{Introduction}

The Löwenheim-Skolem theorem\footnote{Usually attributed to Löwenheim~\cite{lowenheim1915moglichkeiten} and Skolem~\cite{Skolem1920-SKOLUB} by name, but credit is also due to Maltsev~\cite{Mal36} who in turn credits Tarski.} is a central result about first-order logic, entailing that the formalism is incapable of distinguishing different infinite cardinalities.
In particular the theorem's so-called downward part, stating that every infinite model (over a countable signature) can be turned into a countably infinite model with otherwise the exact same behaviour, can be considered surprising or even paradoxical:\footnote{Discovered and discussed by Skolem~\cite{skolem1922einige}. See also the discussion by McCarty and Tennant~\cite{mccarty1987skolem} for a constructivist perspective.} even axiom systems like ZF set theory, concerned with uncountably large sets like the reals or their iterated power sets, admit countable interpretations.
This seeming contradiction in particular and its metamathematical relevance in general led to an investigation of the exact logical assumptions under which the downward Löwenheim-Skolem (DLS) theorem applies.

From the perspective of (classical) reverse mathematics~\cite{friedman1976systems,simpson2009subsystems}, there is a definite answer: the DLS theorem is equivalent to the dependent choice axiom (DC), a weak form of the axiom of choice, stating that there is a path through every total relation~\cite{boolos2002computability,espindolalowenheim,karagila2014downward}.
To argue the first direction, one can organise the usually iterative construction of the countable submodel such that a single application of DC yields the desired result.
For the converse direction, one uses the DLS theorem to turn a given total relation $R$ into a countable sub-relation $R'$, applies a (classically available) instance of the axiom of countable choice (CC) to obtain a path $f'$ through $R'$, which is then reflected back as a path $f$ through $R$.
In total, that is:
\vspace{0.2cm}
$$\DLS~\leftrightarrow~\mathsf{DC}$$
\vspace{-0.2cm}

However, the classical answer is insufficient if one investigates the computational content of the DLS theorem, i.e.~the question how effective the transformation of a model into a countable submodel really is.
A more adequate answer can be obtained by switching to the perspective of \emph{constructive} reverse mathematics~\cite{ishihara_reverse_2006,diener2018constructive}, which is concerned with the analysis of logical strength over a constructive meta-theory, i.e.~in particular without the law of excluded middle (LEM), stating that $p\lor\neg p$ for all propositions $p$, and ideally also without CC~\cite{richman2001constructive}.
In that setting, finer logical distinctions become visible and one can analyse the computational content of the DLS theorem by investigating two questions:
\begin{enumerate}
	\item Does the DLS theorem still follow from DC alone, without any contribution of LEM?
	
	\item Does the DLS theorem still imply all of DC, without any contribution of CC?
\end{enumerate}
\vspace{0.2cm}
$$\DLS\,(+\,\mathsf{CC})~\overset{?}\leftrightarrow~\mathsf{DC}\,(+\,\mathsf{LEM})$$
\vspace{-0.2cm}

In this paper, after giving a fully constructive proof of a weak form of the DLS theorem sharing the same computational content as constructivised model existence theorems~\cite{herbelin2024analysis,forster2020completeness}, we observe that neither 1) nor 2) is the case.
Instead, we clarify which exact fragment of LEM is needed on top of DC to prove the DLS theorem and, conversely, which exact fragment of DC it implies.

Regarding (1), note that the DLS theorem requires LEM in the form of the drinker paradox:\footnote{Polularised as a logic puzzle by Smullyan~\cite{smullyan1990name} and studied in relation to other principles of constructive mathematics by Escard\'o and Oliva~\cite{escardo2010searchable}.} in every (non-empty) bar there is a particular person, such that if this person drinks, then everybody in the bar drinks.
The classical explanation for that phenomenon is simple, either everyone drinks anyway, in which case we can choose just an arbitrary person, or there is someone not drinking, in which case we choose that person and obtain a contradiction to the assumption they would drink.
The role of the drinker paradox in the proof of the DLS theorem now is to ensure that the constructed model correctly interprets universal quantification:\footnote{Incidentally, a similar requirement is needed in Henkin-style completeness proofs~\cite{henkin_completeness_1949}, connecting to our favoured strategy to establish the DLS theorem.} given a formula $\forall x.\,\phi(x)$ one can find a special domain element $a$ such that $\phi(a)$ implies $\forall x.\,\phi(x)$, thereby reducing a test over the whole domain to a test of a single point and easing the correctness proof.
However, we observe that one actually does not need to know $a$ concretely but only that it is contained somewhere in the countable model we construct, more formally, that there is a countable subset $A$ such that $\forall a\in A.\,\phi(a)$ implies $\forall x.\,\phi(x)$.
Seen computationally, this means that we reduce testing over the whole domain to testing only a countable part of it.

On a more abstract level, this observation corresponds to a constructively weaker form of the drinker paradox: in every bar, there is a countable group, such that if everyone in this group drinks, then everybody in the bar drinks.
We call this principle the \emph{blurred drinker paradox} as it continues the bar situation at a later point when everyone's vision got blurred and clear identifications of persons become impossible.
That it corresponds to the DLS theorem is suggestive since both statements in a sense collapse arbitrary to countable cardinality and indeed we can show that, with CC still assumed in the background, the DLS theorem is equivalent to the conjunction of DC with the blurred drinker paradox.
On top of this equivalence, we study the principle (and its dual needed for existential quantification) in a more general setting with arbitrary blurring cardinalities and in relation to other non-constructive principles, unveiling a hierarchy of classically invisible logical structure.

Turning to question (2), we observe that DC becomes underivable from the DLS theorem if we further give up on CC in the background.
This suggests that the actual fragment of DC at play is a weakening without the contribution of CC, i.e.~a principle that follows from DC but does not imply CC.
By a deeper analysis of the proof of the DLS theorem, we actually identify several weakenings of DC that happen to include similar blurring techniques as in the case of the blurred drinker paradox, again connected to the indistinguishability of countable and uncountable cardinalities.
In particular, we show that the DLS theorem is equivalent to the conjunction of a strong blurred form of DC and the blurred drinker paradox, with the former further decomposing into a weaker blurred form of DC conjoined with a blurred form of CC.

Orthogonal to its use for the constructive reverse analysis of the DLS theorem, our discussion of blurred choice axioms contributes to the constructive understanding of the logical structure of choice principles in general, thereby complementing related work by Brede and Herbelin~\cite{brede2021logical}.
For instance, we show that in the absence of CC, the core of DC actually states that every total relation has a total countable sub-relation or, alternatively, that every directed relation has a directed countable sub-relation.
These and similar classically equivalent but constructively weaker reformulations of DC are in visible connection to the DLS theorem.

Our resulting decomposition may then be depicted in the following way

\begin{figure}[ht]
	\centering
	\begin{overpic}[scale=0.17]{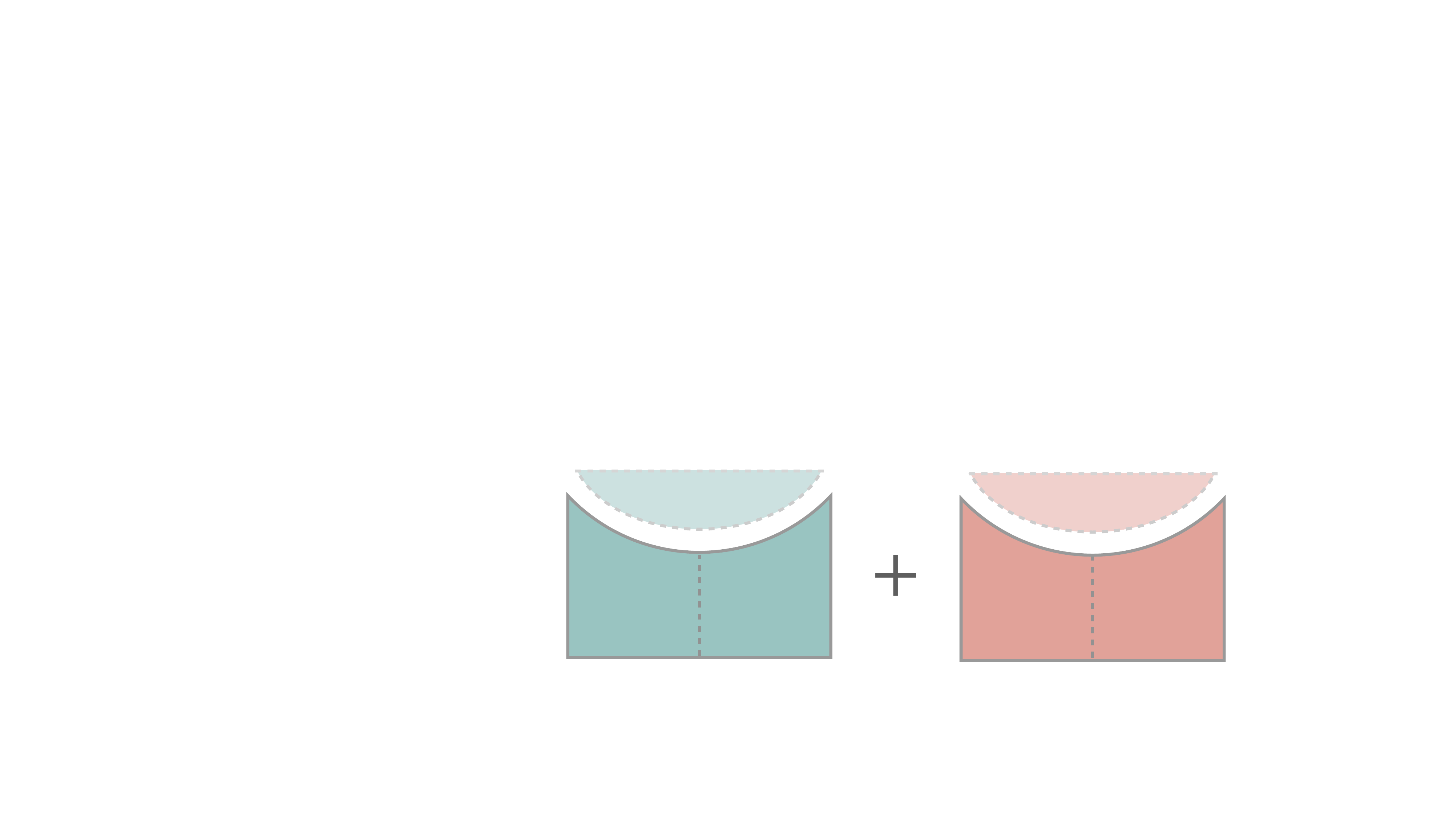}
	  \put(0, 13){\Large$\DLS~~\leftrightarrow$}
	  \put(30, 9){$\BCC$}
	  \put(42.5, 9){$\DDC$}
	  \put(36.8, 19.5){$\CC_{\Nat}$}
	  \put(69, 9){$\BDP$}
	  \put(82, 9){$\BDDP$}
	  \put(77, 19){$\MP$}
	  \put(36.5, -1){\Large$\DC$}
	  \put(74, -1){\Large$\LEM$}
	\end{overpic}
  \end{figure}
  
\noindent
which illustrates that DLS is equivalent to two independent components of DC (abbreviated BCC and DDC) orthogonal to CC on $\Nat$, in addition to two independent components of LEM (abbreviated BDP and BEP), orthogonal to Markov's principle (MP).
Note that the colour-coded abbreviations of all logical principles here and in the remainder of the text are hyperlinked with their definitions in the appendix, where also a more complete diagram of logical connections is given.

While the present paper is written in a deliberately informal way to comply with many systems of (higher-order) constructive mathematics and to address a broad audience, we complement it with a fully mechanised development using the Coq proof assistant~\cite{the_coq_development_team_2023_8161141}.
That is, all definitions and theorems have been formalised in the concrete logical foundation underlying Coq such that the correctness of all proofs can be machine-checked.
The reasons we do this and actually find it worthwhile are threefold:
First, the mechanisation guarantees that all constructions and arguments are correct, which is especially helpful for intricate syntactical arguments needed in the proof of the DLS theorem.
Secondly, using a proof assistant actually helped us identify the new non-constructive principles at play by pointing to the constructions and proofs that needed modification.
Thirdly, as employing the Curry-Howard isomorphism proving in Coq is the same as programming, the computational content of constructive proofs is made explicit: for instance, the fully constructive proof of the weak DLS theorem, in principle, computes the countable submodel.

\paragraph*{Contributions}

The contributions of this paper are:

\begin{itemize}
	\item
	We introduce the blurred drinker paradox and blurred choice axioms as natural families of logical principles in the context of constructive reverse mathematics.
	To classify their strength, among others we show that the blurred drinker paradox is LEM without a contribution of Markov's principle (\Cref{fact_BDP_LEM}) and that the blurred forms of DC are DC without a contribution of CC (\Cref{larry_bdc2}).
	\item
	Using these logical principles, we give precise constructive decompositions of the DLS theorem: assuming CC, it is equivalent to DC and the blurred drinker paradox (\Cref{larry_blurred}), and without CC, the same equivalence holds for various blurrings of DC and CC (\Cref{thm_full}).
	Moreover, we observe that a weak form of the DLS theorem is fully constructive (\Cref{fact_LS_compl}), as a by-product of a known fully constructive model existence theorem~\cite{herbelin2024analysis,forster2020completeness}.
	\item
	Our underlying proof strategy for the DLS theorem (\Cref{thm_LS_environment}) is a streamlining of usual textbook proofs: we construct a syntactic model and collect all structural information in variable environments.
	Thereby the proof relies neither on signature nor domain extensions and is particularly suitable for computer mechanisation.
	\item 
	Our paper is accompanied by a Coq development,\footnote{As part of the Coq-FOL library: \url{https://github.com/uds-psl/coq-library-fol}} ensuring the correctness of our proofs and providing full formal detail, such that the text may remain on a more accessible level.
	For seamless integration, all formalised definitions and theorems in the PDF version of this paper are hyperlinked with \href{\origin toc.html}{HTML documentation} of the code.
	Only the sketched separations in \Cref{sec_separate} remain unformalised.
	\item
	We correct an apparent oversight in the investigation of sub-classical logical principles:\footnote{For instance, a relevant file in the Coq standard library (\url{https://coq.inria.fr/doc/v8.20/stdlib/Coq.Logic.ClassicalFacts.html}) refers to both the drinker paradoxes and the independence of premise as principles strictly weaker than LEM, which is actually only the case if one fixes a domain in advance.} in a higher-order logic, the universal closures of the drinker paradox, the existence principle, and the independence of (general) premise are all equivalent to LEM (\Cref{fact_dp_lem}).
	\item
	In comparison to the conference version~\cite{kirst2025blurred} of this paper, aside from improved explanations and comments at various places, we provide a more general discussion of the blurred axioms, also in comparison with the DLS theorem for arbitrary signature cardinalities (\Cref{sec_general}), and report on some initial results separating them from each other and from their full counterparts (\Cref{sec_separate}).
\end{itemize}

\paragraph*{Outline}
\Cref{sec_prelims} provides an overview of some standard non-constructive axioms and basic concepts of first-order logic.
In \Cref{sec_LS_constructive}, we present three constructive versions of the $\LS$ theorem of increasing strength and, in \Cref{sec_LS_classical}, we reconstruct the classical equivalence of the $\LS$ theorem to DC.
This equivalence is then refined by introducing the blurred drinker paradox in \Cref{sec_BDP}, used in \Cref{sec_LS_blurred} to replace the use of LEM, and by introducing blurred choice axioms in \Cref{sec_BAC}, used in \Cref{sec_full} to replace the use of DC.
These results on the $\LS$ theorems are complemented by generalisations of the blurred axioms in \Cref{sec_general} and some independence proofs in \Cref{sec_separate}.
We close with a discussion concerning the main results, the Coq mechanisation, and future work in \Cref{sec_discussion}.
Note that \Cref{sec_BDP,sec_BAC,sec_general,sec_separate} are written to be accessible for readers only interested in the new logical principles and their decompositions, independent of their use for the $\LS$ theorem in the other sections.

\section{Preliminaries}
\label{sec_prelims}

\subsection{Meta-Theory}

We work in a constructive meta-theory that we leave underspecified to generalise over concrete standard systems such as intuitionistic higher-order arithmetics like HA$^\omega$, intuitionistic or constructive set theories like IZF and CZF, and constructive type theories like MLTT, HoTT, and CIC.
Of course, the latter referring to the Calculus of inductive Constructions~\cite{CoquandCC,paulin1993inductive} implemented in the Coq proof assistant~\cite{the_coq_development_team_2023_8161141} is the concretisation we have in mind, so we also lean towards some type-theoretic notation and jargon, in which we mechanised all results presented in this paper.

On the logical level, we stipulate an impredicative collection $\Prop$ of propositions with standard notation $(\bot,\top,\neg,\land,\lor,\forall,\exists)$ to express composite formulas and a means to include inductively defined predicates.
On the computational level, we assume collections like $\Nat$ of natural numbers and $\Bool$ of Booleans, function spaces like $\Nat\sto\Bool$, and a means to include inductively defined collections.

We frequently use a \emph{pairing function} encoding pairs $(n, m):\Nat^2$ as numbers $\langle n,m \rangle:\Nat$.
We write $f\,\langle n,m \rangle:=\dots$ for function definitions treating an input as an encoded pair.

Given $A$, if there are functions $i:A\sto \Nat$ and $j:\Nat\sto A$ with $j\,(i\,x)=x$ for all $x:X$, then we say that $A$ is \emph{countable}, where we in particular include finite $A$ to avoid speaking of \emph{at most} countable models in the formulations of the $\LS$ theorem.
Note that there are many non-equivalent definitions of countability in constructive logic but for our purposes any of them would do.
Similarly, we represent \emph{countable subsets} as functions $f,g:\Nat \sto A$, and write $f\subseteq g$ if for every $n$ there is $m$ with $f\,n=g\,m$ and $f\cup g:\Nat\sto A$ for the subset
\begin{align*}
	(f\cup g)\, (2n)~&:=~f n\\
	(f\cup g)\, (2n+1)~&:=~g n
\end{align*}
satisfying expectable properties like $f\subseteq f\cup g$ and $g\subseteq f\cup g$.

Lastly, we call a predicate $P:A\sto \Prop$ \emph{decidable} if it coincides with a Boolean function $f:A\sto \Bool$, i.e.~if $\forall x:A.\,P\,x\leftrightarrow f\,x=\btrue$.
This definition naturally generalises to binary relations $R:A\sto B\sto \Prop$ and relations of higher arity.

\subsection{Constructive Reverse Mathematics}

The idea of constructive reverse mathematics is to identify non-constructive logical principles and their equivalences to well-known theorems, thereby classifying logical strength and computational content~\cite{ishihara_reverse_2006,diener2018constructive,bridges2023handbook}.
In preparation of upcoming similar results, we review some well-known connections of logical principles like
\begin{align*}
	\LEM ~:=~& \forall p:\Prop.\,p\lor \neg p\\
	\LPO ~:=~& \forall f:\Nat\sto\Bool.\,(\exists n.\, f\,n=\btrue)\lor (\forall x.\,f\,n=\bfalse)\\
	\DP_A ~:=~& \forall P:A\sto\Prop.\,\exists x.\, P\,x \to \forall y.\, P\,y\\
	\DDP_A ~:=~& \forall P:A\sto\Prop.\,\exists x.\, (\exists y.\,P\,y) \to P\,x\\
	\IP_A ~:=~& \forall P:A\sto\Prop.\,\forall p:\Prop.\,(p\to \exists x.\,P\,x)\to \exists x.\,p\to P\,x
\end{align*}
namely the law of excluded middle, the limited principle of omniscience, the drinker paradox, the existence principle, and the independence of (general) premise principle.\footnote{Note that our convention is that quantifiers bind their relevant variable with maximal scope, e.g.\ the body of $\DP_A$ expresses $\exists x.\, (P\,x \to \forall y.\, P\,y)$ and not $(\exists x.\, P\,x) \to \forall y.\, P\,y$.}
In the situation of $\DP_A$ for $P$, we call the given $x$ the \emph{Henkin witness} for $P$, same for $\DDP_A$ which is a dual variant of the drinker paradox.
We write $\DP$ to denote $\DP_A$ for all inhabited $A$, analogously for $\DDP$ and $\IP$, but 
state results in the more localised form where possible.

\setCoqFilename{LogicalPrinciples}

\begin{coqfact}[ ][scheme_facts_2]
	\label{fact_lpo}
	The following statements hold:
	\begin{enumerate}
		\coqitem[DP_nat_impl_LPO]
		Both $\DP_\Nat$ and $\DDP_\Nat$ imply $\LPO$.
		\coqitem[IP_iff_EP]
		$\DDP_A$ is equivalent to $\IP_A$.
	\end{enumerate}
\end{coqfact}

\begin{proof}
	For (1), assuming $\DP_\Nat$ and a function $f:\Nat\sto\Bool$ yields some $n$ such that $f\,n=\bfalse$ implies $f\,n'=\bfalse$ for all $n'$.
	Then the claim follows by case analysis of $f\,n$.
	The claim for $\DDP_\Nat$ follows analogously and (2) is straightforward, with the choice $p:=\exists y.\,P\,y$ for the backwards direction.
\end{proof}

In contrast to the situation in first-order logic~\cite{warren2018drinker}, the universal closures of these principles in a higher-order meta-theory with comprehension have the full strength of $\LEM$:

\begin{coqfact}[ ][there_are_equivalent]
	\label{fact_dp_lem}
	$\LEM$, $\DP$, $\DDP$, and $\IP$ are equivalent.
\end{coqfact}

\begin{proof}
	That $\LEM$ implies the other principles is well-known.
	As an example for the converse, assume $\DP$ and some $p:\Prop$.
	Using $\DP$ for $A:=\{b:\Bool\mid b=\bfalse \lor (p\lor \neg p)\}$ and
	$$P\,b~:=~\begin{cases}\neg p&\text{if }b=\btrue\\ \top&\text{otherwise}\end{cases}$$
	yields a Henkin witness $b:A$ for $P$.
	If $b=\btrue$, we directly obtain $p\lor\neg p$ and if $b=\bfalse$, then we derive $\neg p$ as follows:
	On assumption of $p$ we know that $\btrue$ is a member of $A$ and since $P\,b=\top$, by the Henkin property we obtain $P\,b'$ for all $b':A$.
	So for $b':=\btrue$ in $A$ we then obtain $\neg p$, in contradiction to the of assumption $p$.
\end{proof}

While the previous principles concern structure below $\LEM$, we now consider structure below the axiom of choice~\cite{jech2008axiom}
\begin{align*}
	\AC_{A,B} ~:=~& \forall R: A\sto B\sto\Prop.\, \total R \to \exists f:A\sto B. \forall x.\, R\,x\,(f\,x)\\
	\DC_{A} ~:=~&\forall R:A\sto A \sto\Prop.\,\total R\to \exists f.\,\forall n.\,R\,(f\,n)\,(f\,(n+1))\\
	\CC_{A} ~:=~& \forall R: \Nat\sto A\sto\Prop.\, \total R \to \exists f:\Nat\sto A. \forall n.\, R\,n\,(f\,n)\\
	\OAC_{A,B} ~:=~& \forall R: A\sto B\sto\Prop.  \exists f:A\sto B. \,\total R \to \forall x.\, R\,x\,(f\,x)
\end{align*}
where $\total R$ expresses the totality of $R$, i.e.\ that for every $x:A$ there is $y:B$ with $R\,x\,y$.
These principles are the axiom of choice, dependent choice, countable choice, and omniscient choice.
Note that the latter is a combination of $\AC$ and $\IP$, similar combinations work for other choice axioms:

\begin{coqfact}[ ][OAC_impl_AC_IP]
	\label{fact_oac}
	For inhabited $A$, $\OAC_{A,B}$ is equivalent to the conjunction of $\AC_{A,B}$ and $\IP_B$.
\end{coqfact}

\begin{proof}
	That $\OAC_{A,B}$ implies $\AC_{A,B}$ is obvious and to derive $\IP_B$ for $P:B\sto\Prop$ one instantiates $\OAC_{A,B}$ to $R\,x\,y:=P\,y$.
	Conversely deriving $\OAC_{A,B}$ for $R:A\sto B\sto \Prop$, note that just using $\AC_{A,B}$ on $R$ would require $\IP_{A\sto B}$ to allow postponing the totality proof.
	Instead, using
	$$R'\,x\,y ~:=~ (\exists y'.\, R\,x\,y')\to R\,x\,y$$
	we just need $\IP_B$ to show $R'$ total to obtain a choice function $f:A\sto B$ from $\AC_{A,B}$ then also witnessing $\OAC_{A,B}$.
\end{proof}

As for the previous principles, we write $\AC$ to denote $\AC_{A,B}$ for all $A,B$ and analogously for the other choice principles, with the restriction to inhabited $A$ in the case of $\DC$.

\begin{coqfact}[ ][AC_impl_DC]
	\label{fact_ac}
	$\AC$ implies $\DC$ and $\DC$ implies $\CC$.
\end{coqfact}

\begin{proof}
	These follow by well-known arguments, see~\cite{jech2008axiom} for instance.
	We sketch the implication from $\DC$ to $\CC$ to prepare a more general version presented in \Cref{fact_bdc}.
	First note that $\DC$ can be equivalently stated for arbitrary $x_0:A$ as
	$$\forall R:A\sto A \sto\Prop.\,\total R\to \exists f.\,f\,0=x_0\land\forall n.\,R\,(f\,n)\,(f\,(n+1))$$
	by restricting $R$ to the sub-relation $R'$ reachable from $x_0$.
	Then a path $f$ through $R$ induces a path $f'$ through $R'$ by first taking the path from $x_0$ to $f\,0$ and by then continuing with $f$.
	
	Now to show $\CC$, assume a total relation $\Nat\sto A\sto\Prop$ on $A$ with some element $a_0$ and consider $A':=\Nat\times A$ and
	$$R'\,(n,x)\,(m,y)~:=~m = n+1\land R\,n\,y$$
	which is total since $R$ is total.
	The modified version of $\DC$ for $R'$ and the choice $x_0:=(0, a_0)$ then yields a path $f':\Nat\to\Nat\times A$ through $R'$ and it is straightforward to verify that $f\,n:=\pi_2\,(f'\,(n+1))$ is a choice function for $R$.
\end{proof}

We write $\DC^\Delta$ and $\CC^\Delta$ for $\DC$ and $\CC$ restricted to decidable relations, respectively.
We assume that $\CC^\Delta$ holds in our meta-theory, as is the case in most formulations of constructive mathematics, while $\DC^\Delta$ is usually unprovable.

Also, while in (extensional) set-theoretic systems $\AC$ implies $\LEM$, this is not the case in most (intensional) type-theoretic systems, and in neither of those does $\DC$ imply $\LEM$.

\subsection{First-Order Logic}

We summarise the concepts for first-order logic (FOL) needed to state the downward Löwenheim-Skolem ($\DLS$) theorem.
The \emph{syntax} of FOL is represented inductively by terms $t:\Term$ and formulas $\phi:\Form$ depending on signatures of function and relation symbols $f$ and $P$:
\begin{align*}
t:\Term&~::=~\x_n\mid f\,\vec{t}\hspace{9.4em}(n:\Nat)\\
\phi,\psi:\Form&~::=~\dot{\bot}\mid P\,\vec{t}\mid\phi\dot{\to}\psi\mid\phi \dot{\land}\psi\mid\phi\dot{\lor}\psi\mid\dot{\forall}\phi\mid\dot{\exists}\phi\hspace{0.6cm}
\end{align*}
The term vectors $\vec t$ are required to have length matching the specified arities $\arity f$ and $\arity P$ of $f$ and $P$.
The negative fragment of FOL referred to in \Cref{fact_modex,fact_LS_compl} comprises formulas only constructed with $\dot\bot$, $\dot\to$, and $\dot\forall$.
Crucially, we assume that the signatures of function and relation symbols are countable, which induces that so are $\Term$ and $\Form$.

\emph{Variable binding} is expressed using de~Bruijn indices~\cite{de_bruijn_lambda_1972}, where a bound variable is encoded as the number of quantifiers shadowing its relevant binder.
Capture-avoiding instantiation with parallel substitutions $\sigma:\Nat\sto \Term$ is defined both for terms as $t[\sigma]$ and formulas as $\phi[\sigma]$.
Notably, $(\dot\forall\phi)[\sigma]$ is defined by $\dot\forall\phi[\uparrow\!\sigma]$ where $\uparrow\!\sigma$ is a suitable shifting substitution.
We denote by $t:\Term^c$ and $\phi:\Form^c$ the closed terms and formulas, respectively, i.e.~those that do not contain free variables.
The latter are also called sentences.

The standard notion of \emph{Tarski semantics} is obtained by interpreting formulas in models $\M$ identified with their underlying domain, providing interpretation functions $\M^{\arity f}\to \M$ for each $f$ and relations $\M^{\arity P}\to \Prop$ for each $P$.
Given an environment $\rho:\Nat\to \M$, we define term evaluation $\hat\rho\,t$ and formula satisfaction $\M\vDash_\rho\phi$ recursively.
For instance,% the denotation of universal quantifiers is
$$\M\vDash_\rho\dot\forall\phi~:=~\forall x:\M.\,\M\vDash_\rho\phi[x]$$
with $\phi[x]$ being a notational shorthand expressing that we consider $\phi$ in the updated environment mapping the first variable to the domain element $x$.

While we will mostly be concerned with semantic considerations, to illustrate the connection of the downward Löwenheim-Skolem theorem to completeness, we also briefly use \emph{deduction systems}.
Deduction systems are represented by inductive predicates $\Gamma\vdash\phi$ relating predicates $\Gamma:\Form\sto\Prop$ on formulas with derivable formulas $\phi$, for instance by rules in the style of natural deduction.
A classical system is obtained by incorporating a rule like double negation elimination, which in a constructive meta-theory is only sound for classical models, i.e.~models satisfying $\M\vDash_\rho \phi$ or $\M\vDash_\rho\dot\neg\phi$ for all $\phi$.

\setBaseUrl{{\origin Undecidability.FOL.}}

\setCoqFilename{Semantics.Tarski.FragmentSoundness}
\begin{coqfact}[Soundness][sound_for_classical_model]
	If $\Gamma\vdash \phi$, then $\M\vDash \phi$ for every classical model $\M$ with $\M\vDash\Gamma$.
\end{coqfact}

\begin{proof}
	By induction on the derivation of $\Gamma\vdash \phi$, most cases are straightforward.
	To show the classical derivation rule sound, the classicality of the model is required.
\end{proof}

The converse property of soundness is completeness, stating that semantic validity implies syntactic provability.
In full generality, completeness cannot be proven constructively~\cite{KreiselMP,henkin1954metamathematical,krivtsov2015semantical,espindola2016semantic,simpson2009subsystems}, but the intermediate model existence theorem is constructive for the negative fragment only containing the logical connectives $\dot{\bot}$, $\dot{\to}$, and $\dot{\forall}$~\cite{herbelin2024analysis,forster2020completeness}.

\setCoqFilename{Completeness.TarskiCompleteness}
\begin{coqfact}[Model Existence][model_bot_correct]
	\label{fact_modex}
	In the negative fragment of FOL, for every consistent context $\Gamma$ of sentences one can construct a syntactic model $\M$ over the domain $\Term$ such that $\M\vDash \Gamma$.
\end{coqfact}

\begin{proof}
	We outline the main construction as it will be relevant for similar syntactic models used in \Cref{fact_LS_witness,thm_LS_environment}.
	In a first step, a constructive version of the Lindenbaum Lemma is used to extend $\Gamma$ into a consistent context $\Delta\supseteq\Gamma$ with suitable closure properties.
	Next, a model over domain $\Term$ with
	$$f^{\M}\,\vec t~:=~f\,\vec t\hspace{2em}\text{and}
	\hspace{2em}P^{\M}\,\vec t~:=~P\,\vec t\in \Delta$$
	is constructed, for which the so-called Truth Lemma
	$$\M\vDash_\sigma \phi~\leftrightarrow~\phi[\sigma]\in \Delta$$
	is verified by induction on $\phi$ for all $\sigma:\Nat\sto\Term$, acting both as substitution and environment in $\M$.
	Then since $\Gamma\subseteq\Delta$, in particular $\M\vDash \Gamma$ follows.
\end{proof}

We will see in \Cref{fact_LS_compl} that the model existence theorem yields a weak but fully constructive formulation of the $\DLS$ theorem.
This formulation will be based on the notion of elementary equivalence.

\setBaseUrl{{\origin Undecidability.FOL.ModelTheory.}}
\setCoqFilename{Core}

\begin{definition}[Elementary Equivalence][elementary_equivalence]
	Two models $\M$ and $\N$ are \emph{elementarily equivalent} if they satisfy the same sentences, i.e.~if for every $\phi:\Form^c$ we have $\M\vDash \phi$ iff $\N\vDash \phi$.
\end{definition}

Note that elementarily equivalent models only satisfy the same closed formulas but otherwise may behave extremely differently.
A much stronger requirement is that of elementary embeddings, taking all formulas into account and therefore completely aligning the behaviour of the models.

\begin{definition}[Elementary Submodel][elementary_homomorphism]
	Given models $\M$ and $\N$, we call $h:\M\sto \N$ an \emph{elementary embedding} if
	$$\forall \rho\phi.\,\M\vDash_\rho \phi\leftrightarrow\N\vDash_{h\circ\rho} \phi.$$
	If such an $h$ exists, we call $\M$ an \emph{elementary submodel} of $\N$.
\end{definition}

The $\DLS$ theorem then states that every model has a countable elementary submodel.

\section{Constructive Downward Löwenheim-Skolem Theorem}
\label{sec_LS_constructive}

We begin with a comparison of different constructive proof strategies for the $\LS$ theorem at various strengths, mostly to identify the underlying concepts in preparation of upcoming results.
First, a weak formulation only yielding an elementarily equivalent model but not necessarily an elementary submodel is obtained as a by-product of a Henkin-style completeness proof via model existence~\cite{henkin_completeness_1949}.
Since the Henkin construction is fully constructive in the negative fragment~\cite{herbelin2024analysis,forster2020completeness}, so is the derived $\LS$ theorem.

\setCoqFilename{FragmentHenkinModel}
\begin{coqfact}[DLS via Model Existence][completeness_LS]
	\label{fact_LS_compl}
	In the negative fragment of FOL, for every classical model one can construct an elementarily equivalent syntactic model.
\end{coqfact}

\begin{proof}
	Given that $\M$ is classical, we can use soundness to show that the set $\Th \M:=\{\phi:\Form^c\mid \M\vDash \phi \}$ of closed formulas satisfied by $\M$ is consistent.
	Then by model existence (\Cref{fact_modex}), there is a model $\N$ with (countable) domain $\Term$ and $\N\vDash \Th\M$.
	This already establishes the first implication showing $\M$ elementarily equivalent to $\N$. For the converse, assuming a closed formula $\phi$ with $\N\vDash \phi$, we obtain $\M\vDash \phi$ by using the classicality of $\M$ and the observation that, if it were $\M\vDash \dot\neg \phi$ instead, also $\N\vDash \dot\neg\phi$ would follow, contradiction.
\end{proof}

The model existence proof can be extended to the full syntax using $\LEM$ alone~\cite{forster2020completeness}, so the derived version of the $\LS$ theorem notably does not rely on any form of choice axioms.
In fact, already the weak law of excluded middle ($\forall p.\,\neg p\lor \neg\neg p$) is sufficient~\cite{herbelin2023new} but we are not aware of a proof showing it necessary for this form of the $\LS$ theorem.

Also note that the Lindenbaum extension used in the proof of \Cref{fact_modex} ensures that quantified formulas have associated Henkin witnesses in form of unused variables.
In the second variant, this intermediate step is not necessary, since we restrict to models that address all Henkin witnesses by closed terms.

\setCoqFilename{ConstructiveLS}

\begin{definition}[Witness Property][witness_prop]
	Given a model $\M$ with environment $\rho$, we call $w:\M$ a \emph{Henkin witness} for $\dot\forall\phi$ if
	$$\M\vDash_\rho\phi[w]~\to~\M\vDash_\rho\dot\forall\phi$$
	and, symmetrically, a Henkin witness for $\dot\exists\phi$ if
	$$\M\vDash_\rho\dot\exists \phi~\to~\M\vDash_\rho\phi[w].$$
	We say that $\M$ has the \emph{witness property} if Henkin witnesses for all formulas can be expressed by closed terms $t:\Term^c$.
\end{definition}

For models with the witness property, we can then derive the stronger conclusion yielding a countable elementary submodel by means of a simplified syntactic model construction.

\begin{coqfact}[DLS via Witnesses][WitnessProperty_LS]
	\label{fact_LS_witness}
	For every model with the witness property one can construct a syntactic elementary submodel.
\end{coqfact}

\begin{proof}
	Given $\M$ with the witness property and an arbitrary environment $\rho$, we consider the syntactic model $\N$ constructed over the (countable) domain $\Term$ by setting
	$$f^{\N}\,\vec t~:=~f\,\vec t\hspace{2em}\text{and}
	\hspace{2em}P^{\N}\,\vec t~:=~P^\M\,(\hat \rho\,\vec t).$$
	
	We prove that $\hat \rho$ is an elementary embedding of $\N$ into $\M$, i.e.~that $\N\vDash_{\sigma} \phi$ if and only if $\M\vDash_{\hat\rho\circ\sigma} \phi$ for all $\sigma:\Nat\sto\Term$ and $\phi$ by induction on $\phi$.
	The only cases of interest are the quantifiers, we explain universal quantification as example.
	
	Let $t:\Term^c$ denote the Henkin witness for $\dot\forall \phi$
	and assume $\N\vDash_{\sigma} \dot\forall\phi$.
	Then in particular $\N\vDash_{\sigma}\phi[t]$ and by inductive hypothesis $\M\vDash_{\hat\rho\circ\sigma} \phi[t]$, which implies $\M\vDash_{\hat\rho\circ\sigma} \dot\forall\phi$ by the Henkin property of $t$.
	That conversely $\M\vDash_{\hat\rho\circ\sigma} \dot\forall\phi$ implies $\N\vDash_{\sigma} \dot\forall\phi$ is straightforward.
\end{proof}

Many proofs of the $\LS$ theorem proceed by extending the signature with enough fresh constants such that a model satisfying the witness property can be constructed~\cite{conant_notes}.
Alternatively, as a the third variant, we replace the condition to represent Henkin witnesses syntactically with environments collecting them semantically.

\begin{definition}[Henkin Environment][Henkin_env]
	Given a model $\M$, we call $\rho:\Nat\sto\M$ a \emph{Henkin environment} if it collects Henkin witnesses for every formula $\phi$ as follows:
	\begin{align*}
		\exists n.\,\M\vDash_\rho\phi[\rho\,n]~\to~\M\vDash_\rho\dot\forall\phi\\
		\exists n.\,\M\vDash_\rho\dot\exists\phi~\to~\M\vDash_\rho\phi[\rho\,n]
	\end{align*}
\end{definition}

Note that if $\M$ has the witness property, then $\M$ admits a Henkin environment by enumerating the evaluations of closed terms, but not vice versa.

The use of Henkin environments then allows to conclude the $\LS$ theorem without extending the signature or domain, which is a particularly suitable strategy for mechanisation.

\begin{theorem}[DLS via Environments][Henkin_LS]
	\label{thm_LS_environment}
	For every model admitting a Henkin environment one can construct a syntactic elementary submodel.
\end{theorem}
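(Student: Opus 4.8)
The plan is to mirror the construction in the proof of Fact~\ref{fact_LS_witness} (DLS via Witnesses), but to replace the use of the witness property with the hypothesised Henkin environment $\rho:\Nat\sto\M$. Concretely, given $\M$ together with its Henkin environment $\rho$, I would build the same syntactic model $\N$ over the countable domain $\Term$ by setting
\[
  f^{\N}\,\vec t~:=~f\,\vec t
  \qquad\text{and}\qquad
  P^{\N}\,\vec t~:=~P^{\M}\,(\hat\rho\,\vec t),
\]
and then prove that $\hat\rho:\N\sto\M$ is an elementary embedding, i.e.~that $\N\vDash_\sigma\phi\leftrightarrow\M\vDash_{\hat\rho\circ\sigma}\phi$ for all $\sigma:\Nat\sto\Term$ and all $\phi$, by induction on $\phi$. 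Since $\Term$ is countable, establishing this equivalence immediately gives the desired countable elementary submodel.

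First I would set up the induction. The atomic case holds by definition of $P^{\N}$ (after checking the term-evaluation compatibility $\hat\rho\,(\,\widehat{\cdot}\,)$ threads through $\sigma$ correctly), and the cases for $\dot\bot$, $\dot\to$, $\dot\land$, $\dot\lor$ are purely propositional and pass straight through the inductive hypotheses. As in Fact~\ref{fact_LS_witness}, the only interesting cases are the quantifiers, and the crucial new ingredient is that the Henkin witnesses now live \emph{in the environment} rather than being realised by closed terms. For the universal case, assuming $\N\vDash_\sigma\dot\forall\phi$ I get $\N\vDash_\sigma\phi[t]$ for every $t:\Term$, in particular for the terms $\x_n$ whose $\hat\rho$-values are exactly the domain elements $\rho\,n$; by the inductive hypothesis this transfers to $\M\vDash_{\hat\rho\circ\sigma}\phi[\rho\,n]$ for all $n$, so the premise $\exists n.\,\M\vDash_{\hat\rho\circ\sigma}\phi[\rho\,n]$ of the Henkin-environment condition is met and I conclude $\M\vDash_{\hat\rho\circ\sigma}\dot\forall\phi$. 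The existential case is symmetric, pushing a witness from $\M$ back into $\N$ via the dual Henkin-environment clause.

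\textbf{The main obstacle} I anticipate is purely technical rather than conceptual: making the de~Bruijn bookkeeping line up between the substitution-and-shifting operation $\phi[\uparrow\!\sigma]$ used when we descend under a quantifier and the environment-update notation $\phi[x]$ appearing in the Henkin-environment condition. One must verify the substitution lemma that $\hat\rho$ commutes with term instantiation, and check that instantiating the bound variable by the term $\x_n$ on the syntactic side corresponds exactly to updating the environment by $\rho\,n$ on the semantic side — this is where the choice of $\rho$ as both substitution-source and environment (as already foreshadowed in the Truth Lemma of Fact~\ref{fact_modex}) is essential and where a careless index shift would break the argument. Since these compatibility lemmas are routine but error-prone, this is precisely the kind of step the Coq mechanisation is meant to certify; modulo those lemmas, the proof is a direct adaptation of Fact~\ref{fact_LS_witness} with the witness property softened to a Henkin environment, and notably still requires neither choice nor $\LEM$ beyond what is packaged into the hypothesis that a Henkin environment exists.
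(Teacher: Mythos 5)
Your proposal is correct and follows essentially the same route as the paper: the same syntactic model $\N$ over $\Term$ induced by $\rho$, the same induction showing $\hat\rho$ is an elementary embedding, and the same use of the Henkin-environment clauses in the quantifier cases. The one point you flag as an obstacle is indeed exactly where the paper's proof does its work — the goal $\M\vDash_{\hat\rho\circ\sigma}\dot\forall\phi$ is first rewritten as $\M\vDash_\rho\dot\forall\phi[\uparrow\!\sigma]$ so that the Henkin property of $\rho$ can be applied to the substituted formula $\phi[\uparrow\!\sigma]$, and your remaining bookkeeping resolves in precisely that way.
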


\begin{proof}
	Given a model $\M$ with Henkin environment $\rho$, we proceed as in the previous proof, i.e.~we consider the syntactic model $\N$ induced by $\rho$.
	Again, inductively verifying that $\hat \rho$ is an elementary embedding of $\N$ into $\M$ is only non-trivial for quantifiers, for illustration assume $\N\vDash_{\sigma} \dot\forall\phi$ for some environment $\sigma:\Nat\sto\Term$ and formula $\phi$.
	We aim to show $\M\vDash_{\hat\rho\circ\sigma} \dot\forall\phi$ which is equivalent to $\M\vDash_\rho \dot\forall\phi[\uparrow\!\sigma]$ and thus reduces to $\M\vDash_\rho\phi[\uparrow\!\sigma][\rho\,n]$ using a witness $\rho\,n$ guaranteed by the Henkin property of $\rho$.
	The latter then follows from $\N\vDash_{\sigma} \dot\forall\phi$ instantiated to $\rho\,n$ and the inductive hypothesis.
\end{proof}

All upcoming proofs of the $\LS$ theorem will factor through \Cref{thm_LS_environment} or a strengthening thereof (\Cref{thm_LS_blurred}).

\section{Downward Löwenheim-Skolem Theorem using DC and LEM}
\label{sec_LS_classical}

In this section, we use the proof strategy induced by \Cref{thm_LS_environment} to reconstruct the well-known connection of the $\LS$ theorem to $\DC$ over a classical meta-theory~\cite{espindolalowenheim,karagila2014downward,boolos2002computability}, providing both $\CC$ and $\LEM$.
First, we show that in this context, $\DC$ can be used to construct a Henkin environment and therefore to conclude the $\LS$ theorem.
As the later, constructively refined, proofs will follow the same pattern, we give the construction here in full detail.

\setCoqFilename{HenkinEnv}

\begin{theorem}[ ][LS_downward_with_DC_LEM]
	\label{thm_LS_LEM}
	Assuming $\DC+\LEM$, the $\LS$ theorem holds.
\end{theorem}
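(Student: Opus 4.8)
The plan is to reduce \Cref{thm_LS_LEM} to \Cref{thm_LS_environment}: given a model $\M$, it suffices to construct a Henkin environment $\rho:\Nat\sto\M$, since then $\M$ has a syntactic countable elementary submodel. The task is thus to build an environment collecting, for every formula, the Henkin witnesses required by the two clauses of the definition of a Henkin environment. The overall strategy is to set up a single total relation on $\M$ whose paths, produced by $\DC$, enumerate enough domain elements to serve as witnesses, using $\LEM$ locally to turn the classical drinker-paradox reasoning into an actual witnessing element.

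First I would observe that, by \Cref{fact_dp_lem}, $\LEM$ gives us the drinker paradox $\DP$ and its dual $\DDP$ on $\M$. Concretely, for any formula $\phi$ and any environment $\sigma$, applying $\DP_\M$ to the predicate $P\,x := (\M\vDash_\sigma\phi[x])$ yields an element $w:\M$ with $\M\vDash_\sigma\phi[w]\to\M\vDash_\sigma\dot\forall\phi$, and symmetrically $\DDP_\M$ yields a witness for $\dot\exists\phi$. So for each pair of a formula and an environment we have, classically, a single domain element that acts as the required Henkin witness. The remaining difficulty is purely about \emph{scheduling}: a Henkin environment must contain witnesses not for one fixed $\sigma$ but for the environment $\rho$ we are building itself, so the construction is genuinely self-referential and must be carried out iteratively.

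This is exactly where $\DC$ enters, and I expect it to be the main obstacle. The plan is to phrase the iterative extension as a total relation $R$ on a suitable carrier — for instance on finite partial environments, or on the countable set $\Nat\times\Form$ indexing the witness demands — where $R$ relates a current stage to a stage that additionally provides, via the $\DP$/$\DDP$ witness described above, the Henkin witness for the next formula in a fixed enumeration of $\Form$ (using that $\Form$ is countable). Totality of $R$ follows from $\LEM$ furnishing a witness at every stage, so $\DC$ produces a path, and the union of the stages along this path yields the desired $\rho$. The delicate point is the bookkeeping: because the witnessing predicate $P$ above refers to satisfaction under the \emph{current} partial environment while the definition of Henkin environment quantifies over the final $\rho$, one must arrange the enumeration so that each demand is eventually met and the witnesses remain correct in the limit. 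I would handle this by only ever extending the environment (so earlier entries are preserved) and by indexing demands so every $(\text{formula},\text{position})$ request is scheduled, then verifying at the end that the completed $\rho$ satisfies both Henkin clauses; the final verification is routine once the enumeration is set up, but designing that enumeration and checking totality of $R$ is the substantive work.
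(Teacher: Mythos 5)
Your proposal follows essentially the same route as the paper: reduce to \Cref{thm_LS_environment}, obtain Henkin witnesses from $\LEM$ via $\DP_\M$/$\DDP_\M$ (\Cref{fact_dp_lem}), set up a step relation on environments whose totality follows from these witnesses, apply $\DC$ to get a path of stages, and take the union. The one organizational difference is in how a single step is defined: the paper's relation $S\,\rho\,\rho'$ demands that $\rho'$ contain witnesses for \emph{all} formulas relative to $\rho$ simultaneously, with totality established by an application of $\CC_\M$ (derived from $\DC$) over an enumeration of $\Form$, whereas you schedule one formula per $\DC$-step. Both can be made to work, but the paper's choice deliberately isolates the $\CC$ contribution as a separate ingredient (this three-way split of $\LEM$/$\CC$/$\DC$ is exactly what gets weakened pointwise in the later constructive refinements), and it makes the limit verification cleaner: since every $F_{k+1}$ witnesses \emph{every} formula relative to $F_k$, one only needs to find a single stage $F_k$ agreeing with the final $\rho$ on the finitely many free variables of $\phi$, after which a permutation substitution transports the witness for the renamed formula back to $\phi$ relative to $\rho$. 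That finitely-many-free-variables argument is precisely the ``delicate bookkeeping'' you defer; in your one-formula-per-step version you must additionally ensure that $\phi$ (or its appropriate renaming) is rescheduled after the relevant entries of $\rho$ have stabilized, so either keep extensions index-preserving and revisit each formula infinitely often, or adopt the paper's all-at-once step.
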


\begin{proof}
	By \Cref{thm_LS_environment}, it is enough to show that under the given assumptions every model admits a Henkin environment.
	Given a model $\M$, the construction of Henkin environment is done in three steps, each making use of a different logical assumption, thereby explaining the respective non-constructive contributions.
	The high-level idea is to describe an extension method by which Henkin witnesses are accumulated stage by stage, where $\LEM$ is needed to guarantee the existence of Henkin witnesses, $\CC$ (as a consequence of $\DC$) is needed to pick such witnesses simultaneously for every formula in every stage, and finally $\DC$ is needed to obtain a path through all stages such that its union constitutes a Henkin environment.
	
	Formally, we define a step relation $S:(\Nat\sto\M)\sto(\Nat\sto\M)\sto\Prop$ to express the extension of environments, such that $S\,\rho\,\rho'$ captures that $\rho'$ contains all witnesses with respect to $\rho$:
	$$S\,\rho\,\rho'~:=\rho\subseteq \rho'\land\forall \phi.\bigwedge\begin{array}{ll}\exists n.\,\M\vDash_\rho\phi[\rho'\,n]~\to~\M\vDash_\rho\dot\forall\phi\\[0.1cm] \exists n.\,\M\vDash_\rho\dot\exists\phi~\to~\M\vDash_\rho\phi[\rho'\,n]\end{array}$$
	Clearly every fixed point of $S$, i.e.~$\rho$ with $S\,\rho\,\rho$, is a Henkin environment so we now explain how such a fixed point is obtained by the aforementioned three steps.
	\begin{enumerate}
		\item
		Given any environment $\rho$, the assumption of $\LEM$ guarantees Henkin witnesses to exist for all formulas by its connection to the drinker paradoxes:
		For $\dot\forall \phi$, the existence of a Henkin witness is exactly the instance $\DP_\M$ for the predicate $\M\vDash_\rho\phi[\_]$ and for $\dot\exists \phi$ exactly the corresponding instance $\DDP_\M$.
		\item
		We now use $\CC_\M$ to show that $S$ is total, i.e.~given some $\rho$ we construct $\rho'$ with $S\,\rho\,\rho'$.
		By the previous step, we know that every formula $\dot\forall\phi$ has a Henkin witness with respect to $\rho$.
		So by fixing some enumeration $\phi_n$ of formulas, we know that for every $n$ the formula $\dot\forall\phi_n$ has a Henkin witness and thus $\CC_M$ yields a function $\rho_\forall$ such that $\rho_\forall\,n$ is the Henkin witness to $\dot\forall\phi_n$.
		Analogously, another application of $\CC_M$ yields a function $\rho_\exists$ such that $\rho_\exists\,n$ is the Henkin witness to $\dot\exists\phi_n$.
		We then set $\rho':=\rho\cup(\rho_\forall\cup\rho_\exists)$ and obtain $S\,\rho\,\rho'$ by simple calculation.
		\item
		We apply $\DC_{\Nat\sto\M}$ to get a path $F:\Nat\sto(\Nat\sto\M)$ through $S$, yielding a cumulative sequence of environments $F_0\subseteq F_1\subseteq F_2\subseteq\dots$ of Henkin witnesses.
		To collect the sequence into a single environment, we define
		$$\rho\,\langle n_1,n_2\rangle~:=~F_{n_1}\,n_2$$
		and verify that $S\,\rho\,\rho$, i.e.~that $\rho$ is Henkin.
		This is obtained by several properties of $\rho$:
		\begin{itemize}
			\item
			$F_k\subseteq \rho$ for every $k$:
			Given $n$ we need to find $n'$ with $F_k\,n=\rho\,n'$, pick $n':=\langle k,n\rangle$.
			\item
			$S\,F_k\,\rho$ for every $k$:
			By the previous fact, we know $F_k\subseteq \rho$, so we just need to show that $\rho$ contains all Henkin witnesses relative to $F_k$.
			Since $F$ is a path through $S$, we know $S\,F_k\,F_{k+1}$, so $F_{k+1}$ contains these witnesses, but then so does $\rho$ given $F_{k+1}\subseteq \rho$.
			\item
			$S\,\rho\,\rho$:
			Since $\rho\subseteq \rho$, we just need to show that for given $\phi$ both Henkin witnesses relative to $\rho$ are contained in $\rho$.
			Since $\phi$ contains only finitely many variables and therefore, since $\rho$ is constructed in cumulative stages, we can find $k$ with $\rho\subseteq_\phi F_k$, meaning $\rho$ is included in $F_k$ on all free variables of $\phi$.
			Then in particular there is a permutation substitution $\sigma$ such that evaluation of $\phi$ in $\rho$ coincides with evaluation of $\phi[\sigma]$ in $F_k$.
			But then, since $S\,F_k\,\rho$ by the previous fact, $\rho$ contains the witnesses for $\phi[\sigma]$ relative to $F_k$ and thus for $\phi$ relative to $\rho$ itself.
			\qedhere
		\end{itemize}
	\end{enumerate}
\end{proof}

We remark that the forthcoming constructive refinements will weaken the respective logical assumptions in each of the three steps above, making precise which independent sources of non-constructivity are at play.
For the converse direction, the necessity for dependent choice relies on the presence of countable choice.

\setCoqFilename{ReverseLS}
\begin{coqfact}[ ][LS_CC_impl_DC]
	\label{fact_rev_cc}
	Assuming $\CC_\Nat$, the $\LS$ theorem implies $\DC$.
\end{coqfact}

\begin{proof}
	The high-level idea is that the $\LS$ theorem reduces $\DC_A$ to $\CC_\Nat$ by transforming $A$ into a countable domain.
	
	Formally, assuming a total relation $R:A\sto A\sto\Prop$, we consider the model $\M$ with domain $A$ and interpretation $P^\M_R\,x\,y:= R\,x\,y$ for some binary relation symbol $P_R$.
	The $\LS$ theorem then yields an elementary submodel $\N$ over a countable domain, say $\Nat$ itself for simplicity, witnessed by an elementary homomorphism $h:\N\sto\M$.
	Since totality is a first-order property with $\M\vDash\total{P_R}$ by assumption, in particular $\N\vDash \total{P_R}$, so the interpretation $P^\N_R:\Nat\sto\Nat\sto\Prop$ must be total, too.
	
	But then $\CC_\Nat$ yields a choice function $f:\Nat\sto\Nat$ for $P^\N_R$ and we can verify that $g:\Nat\sto A$ defined by $g\,n:=h\,(f^n\,0)$ is a path through $R$:
	to justify $R\,(g\,n)\,(g\,(n+1))$ for any $n$, consider an environment $\rho :\Nat\sto\N$ with $\rho\,0:=f^n\,0$ and $\rho\,1:=f^{n+1}\,0$, so $R\,(g\,n)\,(g\,(n+1))$ can be equivalently stated as $\M\vDash_{h\circ \rho} P_R(\x_0,\,\x_1)$.
	By elementarity of $h$ this reduces to $\N\vDash_{\rho} P_R(\x_0,\,\x_1)$, which translates to $P^\N_R\,(f^n\,0)\,(f\,(f^n\,0))$ and holds since $f$ is a choice function for $P^\N_R$.
\end{proof}

\setCoqFilename{AnalysisLS}
\begin{corollary}[Classical Decomposition][LS_iff_DC_under_CC_nat_LEM]
	\label{larry_LS_classical}
	Over $\CC_\Nat+\LEM$ in the background, the $\LS$ theorem is equivalent to $\DC$.
\end{corollary}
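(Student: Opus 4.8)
The plan is to observe that this corollary is nothing more than the assembly of the two preceding results, with the two background hypotheses split cleanly across the two directions of the equivalence. So first I would unfold the biconditional into its two implications and treat each in isolation, keeping track of which background assumption each one consumes.

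For the direction from $\DC$ to the $\LS$ theorem, I would simply invoke \Cref{thm_LS_LEM}, which already derives the $\LS$ theorem from $\DC + \LEM$. Since $\LEM$ is available in the background, no further work is needed. It is worth recalling that the $\CC$ used internally in that proof (to pick Henkin witnesses simultaneously at each stage) is obtained from $\DC$ itself via \Cref{fact_ac}, so this direction genuinely needs only $\DC$ on top of the background $\LEM$, and in particular does not lean on the background $\CC_\Nat$.

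For the converse, from the $\LS$ theorem to $\DC$, I would invoke \Cref{fact_rev_cc}, which establishes exactly this implication under $\CC_\Nat$; since $\CC_\Nat$ is in the background, the argument is complete. Here it is notable that the reflection of a choice function through the elementary embedding only requires choice on $\Nat$ rather than full $\CC$, matching the assumption actually listed in the corollary.

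There is no real obstacle to this proof — everything substantive has already been carried out in \Cref{thm_LS_LEM,fact_rev_cc} — but the one point I would flag is the deliberate asymmetry in the roles of the two background assumptions: $\LEM$ powers only the forward construction of a Henkin environment, while $\CC_\Nat$ powers only the backward transfer of a path. This clean separation is precisely what the later sections exploit, replacing $\LEM$ by the blurred drinker paradox and the combined $\DC + \CC_\Nat$ contribution by suitably blurred choice principles.
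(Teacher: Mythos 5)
Your proof is correct and is exactly what the paper intends: the corollary is stated without a separate proof body precisely because it is the direct combination of \Cref{thm_LS_LEM} (for the direction from $\DC$ to the $\LS$ theorem, using the background $\LEM$) and \Cref{fact_rev_cc} (for the converse, using the background $\CC_\Nat$). Your additional remarks on the asymmetric roles of the two background assumptions are accurate and consistent with how the later sections refine each one separately.
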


All upcoming derivations of logical principles from the $\LS$ theorem will follow the same pattern of turning a given structure into a countable substructure, deriving a certain property in the simpler countable case, and reflecting it back to the original case.
While it seems impossible to derive the full strength of $\DC$ from the $\LS$ theorem, as the latter only reduces $\DC$ to the constructively still unprovable $\CC$, we observe that the restriction of $\DC$ to \emph{decidable} relations can be derived, as it then reduces $\CC$ to the provable principle $\CC^\Delta$.

\setCoqFilename{ClassicalDC}
\begin{coqfact}[ ][LS_impl_DC_delta]
	The $\LS$ theorem implies $\DC^\Delta$.
\end{coqfact}

\begin{proof}
	As in the proof of \Cref{fact_rev_cc} we obtain a total relation $P^\N_R:\Nat\sto\Nat\sto\Prop$ induced by the $\LS$ theorem for a model encoding a total relation $R:A\sto A\sto\Prop$.
	Now since we assume that $R$ is decidable, so is $P^\N_R$ by elementarity and then $\CC^\Delta_\N$ yields a choice function $f:\Nat\sto\Nat$ for $P^\N_R$.
	From there we proceed as before.
\end{proof}

Regarding the contribution of $\LEM$ in the form of the drinker paradoxes needed for the Henkin witnesses in each extension step, there is no chance to fully reverse the result:
For instance to derive $\DP_A$, we could start from a predicate $P:A\sto \Prop$ but even when using the $\LS$ theorem to reduce $P$ to a countable sub-predicate $P':\Nat\sto\Prop$, we have no means to find a particular $n$ such that $P'\,n$ would imply $\forall n.\,P'\,n$ and therefore $\forall x.\,P\,x$.
In other words, while the $\LS$ theorem reduces $\DP_A$ to $\DP_\Nat$, by \Cref{fact_lpo} we would still need at least $\LPO$ to proceed deriving $\DP_\Nat$.
Instead, in the next section we introduce weakenings of the drinker paradoxes that do become provable in the countable case while still being strong enough to derive the $\LS$ theorem.

\section{The Blurred Drinker Paradox}
\label{sec_BDP}

In this section, we introduce the concept of \emph{blurring}, by which we refer to replacing existential quantifiers by quantification over subsets.
By this transformation, logical principles can be obtained with constructively slightly reduced information content, as concrete witnesses are hidden in a blur of computationally indistinguishable elements.
Here, we study that concept at the example of the drinker paradoxes, in \Cref{sec_BAC} we will extend it to choice principles.
A summary diagram will be given at the end of this section.

We introduce the following blurred forms of $\DP$ and $\DDP$:
\begin{align*}
	\BDP^B_A~:=~&\forall P:A\sto\Prop.\exists f:B\sto A.\,(\forall y.\,P\,(f\,y))\to \forall x.\, P\,x\\
	\BDDP^B_A~:=~&\forall P:A\sto\Prop.\exists f:B\sto A.\,(\exists x.\, P\,x)\to\exists y.\,P\,(f\,y)
\end{align*}

Building on the intuition from before, for instance the principle $\BDP^B_A$ states that a Henkin witness for $P:A\sto\Prop$ in the sense of $\DP_A$ is contained in a blur of size at most $B$, represented by a function $f:B\sto A$.
In that situation, we call $f$ a \emph{blurred} Henkin witness or simply a \emph{Henkin blur} and require that $B$ is inhabited.

Note that, while $\DP_A$ and $\DDP_A$ are duals in the sense that $\DP_A$ also yields $\DDP_A$ for negative predicates $\{ x:A\mid \neg P\,x \}$ and vice versa, even in that sense $\BDDP^B_A$ is still slightly stronger than $\BDP^B_A$ as it concludes with a constructively strong existential quantifier.
This will play a role in the slightly different connection to Kripke's schema subject to \Cref{fact_bdp_ks}.

We first collect some properties of the introduced principles:

\setCoqFilename{LogicalPrinciples}
\begin{coqfact}[ ][scheme_facts_basic]
	\label{fact_bdp}
	The following statements hold:
	\begin{enumerate}
		\coqitem[scheme_facts_basic11]
		Both $\BDP^A_A$ and $\BDDP^A_A$.
		\coqitem[scheme_facts_basic2]
		If $\BDP^B_A$ and $\BDP^C_B$, then $\BDP^C_A$.
		\coqitem[scheme_facts_basic3]
		If $\BDDP^B_A$ and $\BDDP^C_B$, then $\BDDP^C_A$.
		\coqitem[scheme_facts_basic41]
		$\DP_A$ implies $\BDP^B_A$ and is equivalent to $\BDP^\Unit_A$.
		\coqitem[scheme_facts_basic51]
		$\DDP_A$ implies $\BDDP^B_A$ and is equivalent to $\BDDP^\Unit_A$.
	\end{enumerate}
\end{coqfact}

\begin{proof}
	We prove each claim independently.
	\begin{enumerate}
		\item
		By choosing $f$ to be the identity function.
		\item
		Assuming $P:A\sto \Prop$, given $f_1:B\sto A$ from $\BDP^B_A$ for $P$ and $f_2:C\sto B$ from $\BDP^C_B$ for $P\circ f_1$, the composition $f_1\circ f_2$ witnesses $\BDP^C_A$ for $P$.
		\item
		Analogous to (2).
		\item
		Assuming $P:A\sto \Prop$, $\DP_A$ for $P$ yields a Henkin witness $x$ for $P$ and the constant function $f\,y:=x$ then witnesses $\BDP^B_A$.
		Next, if $f:\Unit\sto A$ witnesses $\BDP^\Unit_A$ for $P$, then $f\,\star$ witnesses $\DP_A$ for $P$.
		\item
		Analogous to (4).
		\qedhere
	\end{enumerate}
\end{proof}

Note that by (1) in particular $\BDP^\Nat_\Nat$ and $\BDDP^\Nat_\Nat$ hold, meaning that in light of the concluding remark in \Cref{sec_LS_classical} we indeed face weakenings of the drinker paradoxes, provable in the countable case.
For simplicity, from now on we write $\BDP$ to denote $\BDP^\Nat_A$ for all inhabited $A$, as the case of countable blurring is the most relevant one, same for $\BDDP$.

To illustrate the generality of the blurring concept, we compare the blurred drinker paradox to a blurred form of $\IP$:
\begin{align*}
\BIP_A^B ~:=~ &\forall P:A\sto\Prop.\,\forall p:\Prop.\,(p\to \exists x.\,P\,x)\to \exists f:B\sto A.\,p\to \exists y.\,P\,(f\,y)
\end{align*}

For $\BIP$ we could show similar properties as in \Cref{fact_bdp}, stating that it is a generalisation of $\IP$ into a hierarchy of principles.
Instead, we generalise the equivalence of $\DDP$ and $\IP$ recorded in \Cref{fact_lpo}.

\begin{coqfact}[ ][BIP_iff_BEP]
	$\BDDP_A^B$ is equivalent to $\BIP_A^B$.
\end{coqfact}

\begin{proof}
	Analogous to the proof of \Cref{fact_lpo}, for the backwards direction choose $p:=\exists x.\,P\,x$ as before.
\end{proof}

Intuitively, the blurred drinker paradoxes allow to test quantified properties on a large domain by considering restrictions to smaller domains, especially countable ones.
In this perspective, they resemble Kripke's schema~\cite{van1977use}, stating that every proposition can be tested by considering the solvability of Boolean functions over countable domain:
\begin{align*}
	\KS~:=~&\forall p:\Prop.\exists f:\Nat\sto\Bool.\,p\leftrightarrow \exists n.\,f\,n=\btrue\\
	\KS'~:=~&\forall p:\Prop.\exists f:\Nat\sto\Bool.\,(p\sto \neg(\forall n.\,f\,n=\bfalse))\land((\exists n.\,f\,n=\btrue)\sto p)
\end{align*}

Note that $\KS$ expresses that every proposition is $\Sigma_1$, where the logical complexity class $\Sigma_1$ refers to the syntactic form of a single existential quantifier over a decidable predicate.
In comparison, the slightly weaker $\KS'$ replaces the existential quantifier in one direction by a negated universal quantifier.

We now establish the connection of the blurred drinker paradoxes to these formulations of Kripke's schema:

\begin{coqfact}[ ][scheme_facts_1]
	\label{fact_bdp_ks}
	$\BDP$ implies $\KS'$ and $\BDDP$ implies $\KS$.
\end{coqfact}

\begin{proof}
	We show that $\BDDP$ implies $\KS$, the other claim is similar.
	So for $p:\Prop$, consider $A:=\{b:\Bool\mid b=\bfalse \lor p\}$ and
	$$P\,b~:=~\begin{cases}p&\text{if }b=\btrue\\ \bot&\text{otherwise}\end{cases}$$
	for which $\BDDP^\Nat_A$ yields a Henkin blur $f:\Nat\sto A$.
	The induced underlying function $g:\Nat\sto\Bool$ then witnesses $\KS$ for $p$:
	First assuming $p$, we can show $\exists b.\,P\,b$ by using $b=\btrue$.
	Then by the Henkin property of $f$ we obtain $\exists n.\,P\,(f\,n)$ and thus $\exists n.\,g\,n=\btrue$.
	Conversely, if $g\,n=\btrue$ for some $n$, then by construction $p$ can be derived.
\end{proof}

Note that Kripke's schema can also be formulated for arbitrary $B$ in the role of $\Nat$, then admitting the same connections for drinker paradoxes blurred by $B$, see \Cref{sec_general}.
In that sense, the latter can be seen as a generalisation of Kripke's schema.

To further characterise the strength of the blurred drinker paradoxes, note that the difference between $\KS$ and $\KS'$ disappears in the presence of Markov's principle~\cite{Markov1953-MARTTO-31}, stating that $\Sigma_1$ propositions satisfy double negation elimination:
$$\MP ~:=~ \forall f:\Nat\sto\Bool.\,\neg\neg(\exists n.\, f\,n=\btrue)\to \exists n.\, f\,n=\btrue$$

It is straightforward to see that $\MP$ follows from $\LPO$ and thus from $\DP_\Nat$ by \Cref{fact_lpo}.
Since it is also well-known that $\MP$ together with $\KS$ and thus already with $\KS'$ implies $\LEM$, we obtain the following decompositions of $\LEM$ into blurred drinker paradoxes and side conditions.

\begin{coqfact}[ ][scheme_facts_2]
	\label{fact_BDP_LEM}
	The following are equivalent to $\LEM$:
	\vspace{-0.4cm}
	\begin{multicols}{2}
		\begin{enumerate}
			\coqitem[BDP_DP_nat_iff_LEM]
			$\BDP+\DP_\Nat$
			\coqitem[BDP_MP_iff_LEM]
			$\BDP+\MP$
			\coqitem[BEP_EP_nat_iff_LEM]
			$\BDDP+\DDP_\Nat$
			\coqitem[BEP_MP_iff_LEM]
			$\BDDP+\MP$
		\end{enumerate}
	\end{multicols}
\end{coqfact}

\begin{proof}
	That $\LEM$ implies (1)-(4) follows from previous observations.
	We show that (1) and (4) both imply $\LEM$, analogous arguments work for (2) and (3):
	\begin{itemize}
		\item
		By \Cref{fact_dp_lem} it is enough to show $\DP$, i.e.~$\DP_A$ for every inhabited $A$.
		By (4) of \Cref{fact_bdp}, this amounts to showing $\BDP^\Unit_A$, which decomposes into $\BDP^\Nat_A$ and $\BDP^\Unit_\Nat$ by (2) of \Cref{fact_bdp}.
		The former is an instance of $\BDP$ and the latter is equivalent to $\DP_\Nat$ by again using (4) of \Cref{fact_bdp}.
		\item
		By \Cref{fact_bdp_ks}, $\BDDP$ implies $\KS$ and the latter together with $\MP$ implies $\LEM$ by a standard argument:
		Given a proposition $p$, using $\KS$ for the claim $p\lor\neg p$ yields $f:\Nat\sto\Prop$ such that $p\lor\neg p$ is equivalent to $\exists n.\,f\,n=\btrue$.
		By $\MP$, it is enough to show $\neg\neg(\exists n.\,f\,n=\btrue)$ and hence $\neg\neg(p\lor\neg p)$, the latter being a tautology.
		\qedhere
	\end{itemize}
\end{proof}

We summarise the connections of the blurred drinker paradoxes with related principles in the following diagram:

% https://q.uiver.app/#q=WzAsNyxbMSwxLCJcXExQTyJdLFswLDEsIlxcQkRQIl0sWzIsMSwiXFxCRVAiXSxbMSwwLCJcXExFTS8gXFxEUC8gXFxFTC8gXFxJUCJdLFswLDIsIlxcS1MnIl0sWzIsMiwiXFxLUyJdLFsxLDIsIlxcTVAiXSxbMywwXSxbMywxXSxbMywyXSxbMSw0XSxbMCw2XSxbMiw1XSxbNiw1LCIiLDAseyJzdHlsZSI6eyJib2R5Ijp7Im5hbWUiOiJkYXNoZWQifSwiaGVhZCI6eyJuYW1lIjoibm9uZSJ9fX1dLFs0LDYsIiIsMix7InN0eWxlIjp7ImJvZHkiOnsibmFtZSI6ImRhc2hlZCJ9LCJoZWFkIjp7Im5hbWUiOiJub25lIn19fV0sWzEzLDMsIiIsMCx7ImN1cnZlIjoxLCJzaG9ydGVuIjp7InNvdXJjZSI6MjB9LCJsZXZlbCI6MSwic3R5bGUiOnsiYm9keSI6eyJuYW1lIjoiZGFzaGVkIn19fV0sWzE0LDMsIiIsMix7ImN1cnZlIjotMSwic2hvcnRlbiI6eyJzb3VyY2UiOjIwfSwibGV2ZWwiOjEsInN0eWxlIjp7ImJvZHkiOnsibmFtZSI6ImRhc2hlZCJ9fX1dXQ==

\[\begin{tikzcd}[column sep=3.15em,row sep=large]
& {\LEM/ \DP/ \DDP/ \IP} \\
\BDP & \LPO & \BDDP \\
{\KS'} & \MP & \KS
\arrow[from=1-2, to=2-2]
\arrow[from=1-2, to=2-1]
\arrow[from=1-2, to=2-3]
\arrow[from=2-1, to=3-1]
\arrow[from=2-2, to=3-2]
\arrow[from=2-3, to=3-3]
\arrow[""{name=0, anchor=center, inner sep=0}, dashed, no head, from=3-2, to=3-3]
\arrow[""{name=1, anchor=center, inner sep=0}, dashed, no head, from=3-1, to=3-2]
\arrow[curve={height=12pt}, shorten <=9pt, dashed, from=0, to=1-2]
\arrow[curve={height=-12pt}, shorten <=9pt, dashed, from=1, to=1-2]
\end{tikzcd}\]

Here, the solid arrows depict (presumably strict) implications while the dashed arrows depict combined equivalences.

\section{Downward Löwenheim-Skolem Theorem using DC and BDP}
\label{sec_LS_blurred}

We now come back to the $\LS$ theorem and explain how the blurred drinker paradoxes from the previous section capture the contribution of classical logic below $\LEM$, postponing the orthogonal analysis of choice principles below $\DC$.
To this end, we first develop a strengthening of \Cref{thm_LS_environment} by observing that a weaker form of Henkin environments suffices to construct elementary submodels.

\setCoqFilename{ConstructiveLS}
\begin{definition}[Blurred Henkin Environment][Blurred_Henkin_env]
	Given a model $\M$, we call $\rho:\Nat\sto\M$ a \emph{blurred Henkin environment} if it collects Henkin witnesses for every formula $\phi$ as follows:
	\begin{align*}
	(\forall n.\,\M\vDash_\rho\phi[\rho\,n])~\to~\M\vDash_\rho\dot\forall\phi\\
	\M\vDash_\rho\dot\exists\phi~\to~(\exists n.\,\M\vDash_\rho\phi[\rho\,n])
	\end{align*}
\end{definition}

Note that every Henkin environment is a blurred Henkin environment, but not vice versa.
Still, the latter are enough to derive the $\LS$ theorem, as in the construction of the syntactic model actually no concrete witnesses are needed but just a guarantee that they are among the elements selected by the environment.

\begin{theorem}[DLS via Blurring][Blurred_Henkin_LS]
	\label{thm_LS_blurred}
	For every model admitting a blurred Henkin environment one can construct a syntactic elementary submodel.
\end{theorem}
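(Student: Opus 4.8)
The plan is to reuse the syntactic model construction and the induction from the proof of \Cref{thm_LS_environment} almost verbatim, replacing the single-witness reasoning in the quantifier cases by the collective reasoning afforded by the blurred Henkin property. Given $\M$ with a blurred Henkin environment $\rho$, I would build the same syntactic model $\N$ over the countable domain $\Term$ with $f^\N\,\vec t := f\,\vec t$ and $P^\N\,\vec t := P^\M\,(\hat\rho\,\vec t)$, and then show by induction on $\phi$ that $\hat\rho$ is an elementary embedding, i.e.\ that $\N\vDash_\sigma\phi \leftrightarrow \M\vDash_{\hat\rho\circ\sigma}\phi$ for all $\sigma:\Nat\sto\Term$. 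The propositional cases and the two \emph{easy} directions of the quantifier cases are literally as in \Cref{thm_LS_environment}: they only use the inductive hypothesis together with the fact that instantiating a syntactic quantifier by a term $t$ corresponds, via the substitution lemma, to instantiating the $\M$-quantifier by the value $\hat\rho\,t$.

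For the interesting direction of $\dot\forall$, I would assume $\N\vDash_\sigma\dot\forall\phi$ and rewrite the goal $\M\vDash_{\hat\rho\circ\sigma}\dot\forall\phi$ as $\M\vDash_\rho\dot\forall\phi[\uparrow\!\sigma]$, exactly as in the previous proof. Now the \emph{blurred} Henkin property for the formula $\phi[\uparrow\!\sigma]$ applies, reducing the goal to $\forall n.\,\M\vDash_\rho\phi[\uparrow\!\sigma][\rho\,n]$ rather than to the production of a single good witness. Fixing an arbitrary $n$, the key observation is that the variable term $\x_n$ evaluates to $\rho\,n$ under $\rho$; a routine substitution-lemma computation then shows that $\M\vDash_\rho\phi[\uparrow\!\sigma][\rho\,n]$ and $\N\vDash_\sigma\phi[\x_n]$ hold in corresponding environments and are thus equivalent by the inductive hypothesis. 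Since $\N\vDash_\sigma\phi[\x_n]$ is just the instance of $\N\vDash_\sigma\dot\forall\phi$ at the term $\x_n$, this closes the case.

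The existential case is dual: here the blurred Henkin property for $\dot\exists$ turns the hypothesis $\M\vDash_\rho\dot\exists\phi[\uparrow\!\sigma]$ into $\exists n.\,\M\vDash_\rho\phi[\uparrow\!\sigma][\rho\,n]$, and for the witnessing $n$ the same $\x_n$-computation and inductive hypothesis yield $\N\vDash_\sigma\phi[\x_n]$, hence $\N\vDash_\sigma\dot\exists\phi$; the reverse implication again only needs the term-instantiation correspondence and the inductive hypothesis.

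I expect the only real obstacle to be the de~Bruijn bookkeeping: one must verify the substitution identity equating $\M\vDash_\rho\phi[\uparrow\!\sigma][\rho\,n]$ with the evaluation of $\phi$ in the environment sending the first variable to $\rho\,n$ and the rest through $\hat\rho\circ\sigma$, together with the evaluation fact $\hat\rho\,\x_n = \rho\,n$. These two facts are precisely what lets the blurred property—stated only for the environment $\rho$ and its own values $\rho\,n$—line up with quantification over the syntactic domain $\Term$ via the variable terms $\x_n$. Everything else is structurally identical to \Cref{thm_LS_environment}.
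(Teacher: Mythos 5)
Your proposal is correct and follows essentially the same route as the paper: it reuses the syntactic model and induction from \Cref{thm_LS_environment} verbatim and modifies only the critical quantifier directions, where the blurred Henkin property reduces the goal to (or produces) a statement quantified over all/some $\rho\,n$, which is then transported across the embedding via the variable terms $\x_n$ and the inductive hypothesis. The paper's proof states this more tersely (``instantiated to $\rho\,n$''), while you make the $\hat\rho\,\x_n=\rho\,n$ bookkeeping explicit, but the argument is the same.
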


\begin{proof}
	This is basically the same as \Cref{thm_LS_environment} where, for instance, in the critical direction of universal quantification we assume that the syntactic model $\N$ induced by $\rho$ satisfies $\N\vDash_{\sigma} \dot\forall\phi$ for some environment $\sigma:\Nat\sto\Term$ and formula $\phi$ and need to show $\M\vDash_{\hat\rho\circ\sigma} \dot\forall\phi$.
	The latter is equivalent to $\M\vDash_\rho \dot\forall\phi[\uparrow\!\sigma]$ and thus reduces to $\forall n.\,\M\vDash_\rho\phi[\uparrow\!\sigma][\rho\,n]$ using the Henkin property of $\rho$.
	For some given $n$, the claim follows from $\N\vDash_{\sigma} \dot\forall\phi$ instantiated to $\rho\,n$ and the inductive hypothesis.
\end{proof}

Following the structure of \Cref{thm_LS_LEM}, we now derive the $\LS$ theorem from \Cref{thm_LS_blurred} by iteratively constructing blurred Henkin environments.
The previous use of $\LEM$ is now replaced by $\BDP$ to accommodate universal quantification, and by $\BDDP$ to accommodate existential quantification.

\setCoqFilename{HenkinEnv}
\begin{theorem}[ ][LS_downward_with_BDP_BEP_DC]
	\label{thm_LS_blurred'}
	Over $\DC+\BDP+\BDDP$ in the background, the $\LS$ theorem holds.
\end{theorem}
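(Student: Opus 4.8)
The plan is to mirror the structure of the proof of \Cref{thm_LS_LEM}, but replacing the full drinker paradoxes by their blurred counterparts and adjusting the notion of environment accordingly. By \Cref{thm_LS_blurred}, it suffices to show that under $\DC+\BDP+\BDDP$ every model $\M$ admits a \emph{blurred} Henkin environment, i.e.\ an environment $\rho:\Nat\sto\M$ satisfying the two blurred witnessing conditions. As before, I would define a step relation $S:(\Nat\sto\M)\sto(\Nat\sto\M)\sto\Prop$ capturing that $\rho'$ extends $\rho$ and contains \emph{blurred} witnesses relative to $\rho$, namely
$$S\,\rho\,\rho'~:=~\rho\subseteq\rho'\land\forall\phi.\bigwedge\begin{array}{l}(\forall n.\,\M\vDash_\rho\phi[\rho'\,n])\to\M\vDash_\rho\dot\forall\phi\\[0.1cm]\M\vDash_\rho\dot\exists\phi\to(\exists n.\,\M\vDash_\rho\phi[\rho'\,n])\end{array}$$
so that any fixed point of $S$ is exactly a blurred Henkin environment. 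The argument then again proceeds in three stages, with $\BDP$ and $\BDDP$ supplying the witnesses, $\CC$ (as a consequence of $\DC$) collecting them uniformly across all formulas, and $\DC$ producing the path whose union is the desired fixed point.

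The first stage is where the blurred principles do their work: for a fixed $\rho$ and formula $\dot\forall\phi$, the instance $\BDP_\M$ for the predicate $\M\vDash_\rho\phi[\_]$ yields a Henkin \emph{blur} $g:\Nat\sto\M$ with $(\forall n.\,\M\vDash_\rho\phi[g\,n])\to\M\vDash_\rho\dot\forall\phi$, and symmetrically $\BDDP_\M$ supplies a blur for $\dot\exists\phi$. Crucially, a blur is already a countable function $\Nat\sto\M$ rather than a single element, so where the original proof picked one Henkin witness per formula, here each formula contributes a whole countable family. In the second stage, fixing an enumeration $\phi_n$ of formulas, I would apply $\CC_\M$ to the totality statement "for every $n$ there is a Henkin blur for $\dot\forall\phi_n$'' to obtain a function assigning to each $n$ a blur $g_n:\Nat\sto\M$, and likewise for the existential case; flattening these via the pairing function into single environments $\rho_\forall,\rho_\exists:\Nat\sto\M$ and setting $\rho':=\rho\cup(\rho_\forall\cup\rho_\exists)$ gives $S\,\rho\,\rho'$, establishing totality of $S$. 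The third stage is unchanged: $\DC_{\Nat\sto\M}$ yields a path $F_0\subseteq F_1\subseteq\cdots$ through $S$, and $\rho\,\langle n_1,n_2\rangle:=F_{n_1}\,n_2$ collects it, with the same three bullet-point verifications ($F_k\subseteq\rho$, then $S\,F_k\,\rho$, then $S\,\rho\,\rho$ via the finite-variable/permutation-substitution argument) confirming that $\rho$ is a fixed point.

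The main obstacle, and the only genuinely new bookkeeping compared to \Cref{thm_LS_LEM}, is the shift from witnesses to blurs in the collection stage. Because each formula now yields a function $\Nat\sto\M$ rather than a point, the application of $\CC_\M$ produces a function $\Nat\sto(\Nat\sto\M)$ that must be flattened through pairing, and one must check that membership of blur elements in $\rho'$ still gives the blurred witnessing condition for $S$, i.e.\ that the blurred quantifier over $n$ is correctly reflected once the blur is absorbed into the extended environment. I expect this to be a routine but slightly fiddly re-indexing: the condition $\forall n.\,\M\vDash_\rho\phi[\rho'\,n]$ must be shown to specialise appropriately to the blur's indices, which follows because $\rho'$ contains the entire blur $g_n$ as a sub-environment. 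The final fixed-point verification requires no change in substance, since the permutation-substitution reconciliation of $\rho$ with a stage $F_k$ on the finitely many free variables of $\phi$ is insensitive to whether the underlying witnessing data are points or blurs.
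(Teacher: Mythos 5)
Your proposal is correct and follows essentially the same route as the paper's proof: the same reduction to \Cref{thm_LS_blurred}, the same step relation $S$, and the same three-stage construction with $\BDP$/$\BDDP$ supplying blurs, countable choice collecting and flattening them via pairing, and $\DC$ producing the path whose pairing-collapse is the fixed point. The only cosmetic slip is writing $\CC_\M$ where the instance actually used is $\CC_{\Nat\sto\M}$ (the choice is over blurs, i.e.\ functions $\Nat\sto\M$), which your own prose already makes clear.
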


\begin{proof}
	We employ \Cref{thm_LS_blurred}, leaving us with the construction of a blurred Henkin environment for an arbitrary model $\M$.
	This construction follows the same outline as in the proof of \Cref{thm_LS_LEM}, i.e.~we devise a step relation $S$ accumulating Henkin witnesses and obtain a blurred Henkin environment as a fixed point of $S$ in three steps.
	As step relation $S\,\rho\,\rho'$, we this time only require that $\rho'$ is a Henkin blur for all formulas $\phi$ relative to $\rho$, instead of the stronger requirement to provide concrete witnesses:
	$$S\,\rho\,\rho'~:=\rho\subseteq \rho'\land\forall \phi.\bigwedge\begin{array}{ll}(\forall n.\,\M\vDash_\rho\phi[\rho'\,n])~\to~\M\vDash_\rho\dot\forall\phi\\[0.1cm] \M\vDash_\rho\dot\exists\phi~\to~(\exists n.\,\M\vDash_\rho\phi[\rho'\,n])\end{array}$$
	\begin{enumerate}
		\item
		Given $\rho$ and $\phi$ there is a guarantee to be able to proceed, as the instance $\BDP_\M$ for the predicate $\M\vDash_\rho\phi[\_]$ yields a Henkin blur for $\dot\forall\phi$ and the same instance of $\BDDP_\M$ a Henkin blur for $\dot\exists\phi$.
		\item
		We derive totality of $S$ at $\rho$ using $\CC_{\Nat\sto\M}$ (following from $\DC$) on the previous fact, thus yielding choice functions $f_\forall,f_\exists:\Nat\sto(\Nat\sto\M)$ such that $f_\forall \,n$ is a Henkin blur for $\dot\forall\phi_n$ and $f_\exists \,n$ is a Henkin blur for $\dot\exists\phi_n$.
		By using Cantor pairing again, they induce environments $\rho_\forall\,\langle n1,n2\rangle:=f_\forall\,n_1\,n_2$ and $\rho_\exists\,\langle n1,n2\rangle:=f_\exists\,n_1\,n_2$ and for the choice $\rho':=\rho\cup(\rho_\forall\cup \rho_\exists)$ it is straightforward to verify $S\,\rho\,\rho'$ as desired.
		\item
		Finally, we can use $\DC_{\Nat\sto\M}$ to obtain a path $F:\Nat\sto(\Nat\sto\M)$ through $S$ and verify that $\rho\,\langle n1,n2\rangle:=F_{n_1}\,n_2$ is a fixed point of $S$ and thus a blurred Henkin environment similarly as before.
		\qedhere
	\end{enumerate}
\end{proof}

Note that restricting to the negative fragment of FOL, only $\BDP$ would be needed, meaning the non-constructive contributions of both sorts of quantification in the $\LS$ theorem are independent.
Conversely, from the $\LS$ theorem over the negative fragment we can derive $\BDP$, and with existential quantification present, also $\BDDP$ becomes derivable.

\setCoqFilename{ReverseLS}
\begin{coqfact}[ ][LS_impl_BEP]
	\label{fact_bdp_reverse}
	The $\LS$ theorem implies $\BDP+\BDDP$.
\end{coqfact}

\begin{proof}
	We show how to derive $\BDP$ from the $\LS$ theorem, the case of $\BDDP$ is dual.
	Similar to the reverse proofs given in \Cref{sec_LS_classical}, the high-level idea is that the $\LS$ theorem reduces $\BDP_A^\Nat$ to the provable $\BDP_\Nat^\Nat$.
	
	Formally, assume a predicate $P:A\sto\Prop$ for some inhabited $A$, which we encode as a model $\M$ over $A$ by $P^\M\,x:=P\,x$.
	Then there must be an elementary embedding $h:\N\sto\M$ from some countable model $\N$, conceived over the domain $\Nat$ for simplicity.
	
	Since in $\N$ we do have a function $f:\Nat\sto\Nat$ such that $\forall n.\, P^\N\,(f\,n)$ implies $\forall n.\, P^\N\,n$, for instance by taking $f$ to be the identity, we obtain that $h\circ f$ is a Henkin blur for $P$ as follows:
	Assuming $\forall n.\, P\,(h\,(f\,n))$ we show $\forall n.\, P^\N\,(f\,n)$ by fixing $n$ and formulating $P^\N\,(f\,n)$ as $\N\vDash_\rho P(\x_0)$ for $\rho\,0:=f\,n$, which by elementarity follows from $\M\vDash_{h\circ\rho} P(\x_0)$, that is the assumption $P\,(h\,(f\,n))$.
	But then $\forall x.\, P^\N\,x$, which again reflects up into $\M$ using $h$ and thus yields $\forall x.\, P\,x$.
\end{proof}

\setCoqFilename{AnalysisLS}
\begin{corollary}[Blurred Decomposition][LS_iff_DC_BDP_BEP_under_CC_nat]
	\label{larry_blurred}
	Over $\CC_\Nat$ assumed in the background, the $\LS$ theorem is equivalent to \mbox{$\DC+\BDP+\BDDP$}.
\end{corollary}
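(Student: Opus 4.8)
The plan is to derive both directions of the equivalence by composing results already established, so that no genuinely new argument is needed; the only subtlety is tracking where the background hypothesis $\CC_\Nat$ is actually consumed.

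For the implication from $\DC+\BDP+\BDDP$ to the $\LS$ theorem, I would appeal directly to \Cref{thm_LS_blurred'}. This direction does not even use the background $\CC_\Nat$: the proof of \Cref{thm_LS_blurred'} obtains the countable choice it requires from $\DC$ itself via \Cref{fact_ac}, so the backward implication holds unconditionally.

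For the converse, assuming the $\LS$ theorem, I would recover the three conjuncts separately. The principles $\BDP$ and $\BDDP$ come immediately from \Cref{fact_bdp_reverse}, which derives them from $\LS$ with no further hypotheses. To obtain $\DC$ I would invoke \Cref{fact_rev_cc}, whose proof uses $\LS$ to replace an arbitrary total relation by a countable one and then extracts a choice path via $\CC_\Nat$; here the background assumption is exactly what makes the reduction go through. Conjoining the three outputs yields $\DC+\BDP+\BDDP$.

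The closest thing to an obstacle is merely bookkeeping: one must record that $\CC_\Nat$ is genuinely required in the forward direction---it cannot be dropped, as indicated by the discussion of question~(2) in the introduction---yet is redundant in the backward direction, since $\DC$ already entails it by \Cref{fact_ac}. With that accounted for, the corollary is a direct composition of \Cref{thm_LS_blurred'}, \Cref{fact_bdp_reverse}, and \Cref{fact_rev_cc}.
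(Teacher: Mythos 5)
Your proposal is correct and matches the paper's intended argument exactly: the corollary is obtained by composing \Cref{thm_LS_blurred'} for the backward direction (where, as you note, $\CC$ is recovered from $\DC$ via \Cref{fact_ac}) with \Cref{fact_bdp_reverse} and \Cref{fact_rev_cc} for the forward direction, the latter being the only place $\CC_\Nat$ is consumed. No further comment is needed.
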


That means, disregarding the orthogonal contribution of choice principles, the logical strength of the $\LS$ theorem corresponds exactly to the blurred drinker paradoxes.

\section{Blurred Choice Axioms}
\label{sec_BAC}

In order to complete the analysis, in this section we discuss similarly blurred forms of choice principles that allow a precise decomposition of the $\LS$ theorem.
For simplicity, we will consider the concrete case of countable blurring, i.e.~using functions $f:\Nat\sto A$ but sketch more general formulations at a later point~(\Cref{sec_future}).
Again, a summary diagram will be given at the end of this section.

We begin with a blurring of countable choice that weakens the information provided by a choice function for a total relation by hiding the choices within a countable subset:
{\small
$$\BCC_A~:=~\forall R:\Nat\sto A\sto \Prop.\,\total R \to \exists f:\Nat\sto A.\forall n.\exists m.\,R\,n\,(f\,m)$$
}

As usual, we write $\BCC$ to denote $\BCC_A$ for all $A$, similarly for all upcoming choice principles.
In the situation of $\BCC_A$ we call $f:\Nat\sto A$ a \emph{blurred} choice function.
Note that in the case of $A:=\Nat$ the identity on $\Nat$ is a blurred choice function, so as in the case of the blurred drinker paradoxes we have the desired property that $\BCC$ and all upcoming blurred choice principles hold in the countable case, suggesting their connection to the $\LS$ theorem.
Moreover, blurred choice principles follow from their regular counterparts, allowing the following decomposition of countable choice:

\setCoqFilename{LogicalPrinciples}
\begin{coqfact}[ ][CC_impl_BCC_on]
	\label{fact_bcc}
	$\CC$ is equivalent to $\BCC+\CC_\Nat$.
\end{coqfact}

\begin{proof}
	To show that $\CC_A$ implies $\BCC_A$, for a total relation $R:\Nat\sto A\sto \Prop$ we obtain a choice function $f:\Nat\sto A$ which in particular can be considered a blurred choice function.
	
	Starting from $\BCC_A$, an application of $\CC_\Nat$ is enough to turn a blurred choice function into a choice function.
\end{proof}

We will see in \Cref{sec_full} that $\BCC$ is enough to handle step (2) of the construction in \Cref{thm_LS_blurred'}, i.e.~to derive totality of the step relation $S$.
Regarding step (3), i.e.~the derivation of a fixed point for $S$, we need to find a weakening of $\DC$ without the contribution of $\CC$, so that it becomes provable in the countable case.
A first attempt is as follows, where we simply replace the path through a total relation $R$ guaranteed by $\DC$ by a countable and total sub-relation:
$$\BDC_A~:=~\forall R:A\sto A\sto\Prop.\, \total R \to \exists f:\Nat\sto A.\,\total {R\circ f}$$

Note that by $R\circ f$ we refer to the pointwise composition of $R$ and $f$, i.e.~to the relation $(R\circ f)\,n\,m:=R\,(f\,n)\,(f\,m)$.
The obtained function $f$ is called a \emph{blurred} path as it still represents a sequence through $R$ but hides the respective successors.

We then show that, while implying $\BCC$, the obtained $\BDC$ needs some contribution of $\CC$ to get back the strength of $\DC$.

\begin{coqfact}[ ][Result]
	\label{fact_bdc}
	The following statements hold:
	\begin{enumerate}
		\coqitem[DC_impl_BDC_on]
		$\DC_A$ implies $\BDC_A$.
		\coqitem[BDC_impl_BCC]
		$\BDC$ implies $\BCC$.
		\coqitem[DC_iff_BDC_CC_nat]
		$\DC$ is equivalent to $\BDC+\CC_\Nat$.
	\end{enumerate}
\end{coqfact}

\begin{proof}
	We prove all claims independently:
	\begin{enumerate}
		\item
		Again as in \Cref{fact_bcc}, the blurred conclusion of $\BDC_A$ is visibly a weakening of the conclusion of $\DC_A$.
		
		\item
		First as in \Cref{fact_ac}, note that $\BDC$ can be equivalently stated for arbitrary $x_0:A$ as
		$$\forall R:A\sto A \sto\Prop.\,\total R\to \exists f.\,f\,0=x_0\land\total {R\circ f}$$
		by restricting $R$ to the sub-relation $R'$ reachable from $x_0$.
		Then a blurred path $f$ through $R$ induces a blurred path $f'$ through $R'$ by first taking the path from $x_0$ to $f\,0$ and by then continuing with $f$.
		
		Now to show $\BCC$, assume a total relation $R: \Nat\sto A\sto\Prop$ on $A$ with some $a_0$ and consider $A':=\Nat\times A$ and
		$$R'\,(n,x)\,(m,y)~:=~m = n+1\land R\,n\,y$$
		which is total since $R$ is total.
		The modified version of $\BDC$ for $R'$ and the choice $x_0:=(0, a_0)$ then yields a blurred path $f':\Nat\to\Nat\times  A$ through $R'$ and it remains to verify that $f\,n:=\pi_2\,(f'\,n)$ is a blurred choice function for $R$.
		
		First, using the properties of $f'$ we derive
		$$\forall n.\,\exists m.\,\pi_1 (f'\,m) = n$$
		by induction on $n$, choosing $0$ in the base case and, in the inductive step where we have some $m$ with $\pi_1 (f'\,m) = n$, by choosing $m'$ with $R'\,(f'\,m)\,(f'\,m')$ which we obtain by totality of $R'\circ f'$.
		
		Now, given some $n$, we find $m$ with $R\,n\,(f\,m)$ by first finding $m_1$ with $\pi_1 (f'\,m_1) = n$ as above and subsequently by finding $m_2$ with $R'\,(f'\,m_1)\,(f'\,m_2)$ via totality of $R'\circ f'$.
		Then $R\,n\,(f\,m_2)$ as this is equivalent to $R\,(\pi_1\,(f'\,m_1))\,(\pi_2\,(f'\,m_2))$ which in turn follows from $R'\,(f'\,m_1)\,(f'\,m_2)$.
		
		\item
		Given (1) and \Cref{fact_ac} it only remains to show that $\BDC$ and $\CC_\Nat$ together imply full $\DC$.
		So assume some total $R:A\sto A\sto\Prop$, then $\BDC$ yields $f:\Nat\sto A$ such that $R\circ f$ is total.
		The latter is a relation $\Nat\sto\Nat\sto\Prop$ to which $\CC_\Nat$ yields a choice function $g:\Nat\to\Nat$.
		A path $h:\Nat\to A$ through $R$ is then obtained by the function $h\,n:=f (g^n\,0)$.
		\qedhere
	\end{enumerate}
\end{proof}

Although $\BDC$ therefore yields the desired decomposition of $\DC$, it does not seem strong enough for the $\LS$ theorem.
Intuitively, the problem is that $\BDC$ does not have access to the history of previous choices that is needed to merge the environments in proof step (3) of \Cref{thm_LS_blurred'}.
This problem can be fixed by strengthening to relations on finite sequences $A^*$ or, sufficiently, over pairs $A^2$:
$$\BDC^2_A~:=~\forall R:A^2\sto A\sto\Prop.\, \total R \to \exists f:\Nat\sto A.\,\total {R\circ f}$$

As for $\BDC$, by $R\circ f$ we refer to pointwise composition of $R$ and $f$, this time component wise for pairs.
First note that $\BDC^2$ is indeed a strengthening of $\BDC$:

\begin{coqfact}[ ][BDC2_impl_BDC_on]
	\label{fact_bdc2}
	$\BDC^2_A$ implies $\BDC_A$.
\end{coqfact}

\begin{proof}
	Straightforward by turning $R:A\sto A\sto\Prop$ to show $\BDC_A$ into $R'\,(x,y)\,z:=R\,x\,z$ and then applying $\BDC^2_A$.
\end{proof}

We leave the fact that $\BDC^2$ also corresponds to a version of $\DC$ without the contribution of $\CC$ to a later point, as this proof will be indirect requiring intermediate structure, see \Cref{larry_bdc2}.

As we will see in \Cref{sec_full}, the principle $\BDC^2$ is already strong enough for the desired purpose of replacing $\DC$ in the proof of \Cref{thm_LS_blurred'}.
Moreover, it is possible to again weaken $\BDC^2$ to not even derive $\BDC$, thus completely orthogonalising the different ingredients for the $\LS$ theorem:
$$\DDC_A~:=~\forall R: A\sto A \sto \Prop.\, \directed R \to \exists f:\Nat\sto A.\, \directed{R\circ f}$$

Here, by $\directed R$ we refer to $R$ being \emph{directed}, i.e.~satisfying for every $x,y:A$ that there is $z:A$ with $R\,x\,z$ and $R\,y\,z$.
So informally, $\DDC$ states that every directed relation has a countable directed sub-relation, which captures the same idea leading to $\BDC^2$ that the information of two previous environments should be combinable.

Indeed, $\BDC^2$ can be decomposed independently into $\DDC$ and $\BCC$, with one direction akin to the iterative construction underlying \Cref{thm_LS_blurred'} and the forthcoming \Cref{thm_full}.

\begin{coqfact}[ ][Result]
	\label{fact_ddc}
	The following statements hold:
	\begin{enumerate}
		\coqitem[BDC2_impl_DDC]
		$\BDC^2_A$ implies $\DDC_A$.
		\coqitem[BDC2_iff_DDC_BCC]
		$\BDC^2$ is equivalent to $\DDC+\BCC$.
	\end{enumerate}
\end{coqfact}

\begin{proof}
	We prove both claims independently:
	\begin{enumerate}
		\item
		Directedness of $R: A\sto A \sto \Prop$ induces totality of
		$$R'\,(x,y)\,y~:=~ R\,x\,z \land R\,y\,z$$
		and, conversely, totality of a countable sub-relation $R'\circ f$ induces directedness of $R\circ f$.
		The claim follows.
		
		\item 
		The first direction follows from (1) and \Cref{fact_bdc,fact_bdc2}.
		For the converse, assume a total relation $R:A^2\sto A\sto\Prop$.
		Consider $S:(\Nat\sto A)\sto(\Nat\sto A)\sto\Prop$ defined by
		$$S\,\rho\,\rho'~:=~\rho\subseteq \rho' \land \forall n m.\exists k.\,R\,(\rho\,m, \rho\,n)\, (\rho'\,k)$$
		which can be shown total using $\BCC$ as follows:
		Given some $\rho$, consider the relation $R':\Nat\sto A\sto\Prop$ defined by
		$$R'\,\langle n_1,n_2\rangle\,x~:=~ R\,(\rho\,n_1, \rho\,n_2)\,x$$
		which is total since $R$ is total.
		Then $\BCC_A$ yields a blurred choice function $\rho':\Nat\sto A$ for $R'$ and it is easy to verify that $S\,\rho\,(\rho\cup \rho')$ holds, thus establishing totality of $S$ as desired.
		
		Employing totality, we obtain that $S$ is directed:
		Given $\rho_1$ and $\rho_2$ totality yields $\rho_1'$ and $\rho_2'$ with both $S\,\rho_1\,\rho_1'$ as well as $S\,\rho_2\,\rho_2'$.
		It then follows that both $S\,\rho_1\,(\rho_1'\cup\rho_2')$ and  $S\,\rho_2\,(\rho_1'\cup\rho_2')$ by simple calculation.
		
		We now apply $\DDC_{\Nat\sto A}$ to $S$ and obtain $F:\Nat\sto(\Nat\sto A)$ such that $S\circ F$ is directed.
		Then $\rho:\Nat\sto A$ defined by
		$$\rho\,\langle n_1,n_2\rangle~:=~F_{n_1}\,n_2$$
		can be shown to witness $\BDC^2$ for $R$ as desired:
		Indeed, to verify that $R\circ \rho$ is total (in fact stating that $\rho$ is a fixed point of $S$), we assume $n=\langle n_1,n_2\rangle$ and $m=\langle m_1,m_2\rangle$ and need to find $k$ with $R\,(\rho\,n)\,(\rho\,m)\,(\rho\,k)$.
		Using the directedness of $S\circ F$ for $n_1$ and $m_1$, we obtain $w$ with $F_{n_1}\subseteq F_w$ and $F_{m_1}\subseteq F_w$, so there are $n_3$ and $m_3$ with $F_{n_1}\,n_2=F_w\,n_3$ and $F_{m_1}\,m_2=F_w\,m_3$.
		Moreover, by totality of $S\circ F$ for $w$ we obtain $k_1,k_2$ with $R\,(F_w\,n_3)\,(F_w\,m_3)\,(F_{k_1}\,k_2)$ and thus $R\,(\rho\,n)\,(\rho\,m)\,(\rho\,k)$ for the choice $k:=\langle k_1,k_2\rangle$.
		\qedhere
	\end{enumerate}
\end{proof}

This decomposition of $\BDC^2$ into $\DDC$ and $\BCC$ then in particular entails the decomposition of $\DC$ into $\BDC^2$ and $\CC$.

\begin{coqfact}[ ][DC_impl_BDC2]
	$\DC$ implies $\BDC^2$.
\end{coqfact}

\begin{proof}
	We first show that $\DC_A$ implies a weaker version of $\DDC_A$ where the directed relation $R:A\sto A\sto \Prop$ is additionally required to be transitive.
	In that case and since directed relations are total, $\DC_A$ yields a path $f:\Nat\to A$ through $R$.
	It then remains to show that $R\circ f$ is directed, which follows since given w.l.o.g.\ $n<m$ we have both $R\,(f\,n)\,(f (m+1))$ using transitivity of $R$ along the path $f$ connecting $n$ and $m$, as well as $R\,(f\,m)\,(f (m+1))$ by a single step along $f$.
	
	Now since the relation $S$ defined in the proof part (2) of \Cref{fact_ddc} is transitive by construction, this modified version of $\DDC$ together with $\BCC$, following from $\DC$ by \Cref{fact_bdc}, is enough to derive $\BDC^2$ as before.
\end{proof}

\begin{corollary}[ ]
	\label{larry_bdc2}
	The following statements hold:
	\begin{enumerate}
		\coqitem[DC_iff_BDC2_CC_nat]
		$\DC$ is equivalent to $\BDC^2+\CC_\Nat$.
		\coqitem[DC_iff_DDC_CC]
		$\DC$ is equivalent to $\DDC +\CC$.
	\end{enumerate}
\end{corollary}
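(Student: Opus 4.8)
The plan is to derive both equivalences purely by composing the facts already established in this section, so that no new model-theoretic or choice-theoretic construction is required; the substantive work has already been carried out in \Cref{fact_ddc} and in the immediately preceding fact that $\DC$ implies $\BDC^2$. What remains is bookkeeping: tracking that the implications chain in the correct direction and that the components of $\CC$ are routed to the right consumers.

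For part (1), the forward direction $\DC\to\BDC^2+\CC_\Nat$ follows by combining the preceding fact ($\DC$ implies $\BDC^2$) with \Cref{fact_ac}, which yields $\DC\to\CC$ and hence $\DC\to\CC_\Nat$ as the special case $A:=\Nat$. For the backward direction $\BDC^2+\CC_\Nat\to\DC$, I would first apply \Cref{fact_bdc2} to descend from $\BDC^2$ to $\BDC$, and then invoke part (3) of \Cref{fact_bdc}, stating that $\BDC+\CC_\Nat$ already recovers full $\DC$.

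For part (2), the forward direction $\DC\to\DDC+\CC$ again uses the preceding fact together with part (1) of \Cref{fact_ddc} to obtain $\DC\to\BDC^2\to\DDC$, while $\DC\to\CC$ is once more \Cref{fact_ac}. For the backward direction $\DDC+\CC\to\DC$, the key observation is that $\CC$ splits into $\BCC+\CC_\Nat$ by \Cref{fact_bcc}; feeding $\DDC$ together with the $\BCC$ component into part (2) of \Cref{fact_ddc} reconstitutes $\BDC^2$, and combining this with the remaining $\CC_\Nat$ component yields $\DC$ via part (1) just established.

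The only point requiring attention, rather than a genuine obstacle, is to ensure that the two halves of $\CC$ produced by \Cref{fact_bcc} are directed to the correct places, namely $\BCC$ into the reconstruction of $\BDC^2$ and $\CC_\Nat$ into the concluding step. Since every link in both chains is an already-proven fact, the corollary is immediate once these dependencies are laid out explicitly.
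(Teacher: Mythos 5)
Your proposal is correct and matches the paper's intent exactly: the corollary is stated without an explicit proof precisely because it follows by composing the preceding facts in the way you describe (the fact that $\DC$ implies $\BDC^2$, \Cref{fact_ac}, \Cref{fact_bdc2}, part (3) of \Cref{fact_bdc}, \Cref{fact_bcc}, and \Cref{fact_ddc}). No gaps.
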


Finally we show that, similar to \Cref{fact_oac}, $\BDC^2$ also has an omniscient version that exactly adds $\BDP$ and $\BDDP$:
$$\OBDC^2_A~:=~\forall R:A^2\sto A\sto\Prop. \exists f:\Nat\sto A.\,\total R \leftrightarrow\total {R\circ f}$$

We here state only one direction of the decomposition for $\OBDC^2$ as the other direction follows more directly as a by-product of the full analysis of the $\DLS$ theorem in \Cref{sec_full}.

\begin{coqfact}[ ][OBDC_implies_BDP_BEP_BDC2]
	\label{fact_obdc}
	$\OBDC^2_A$ implies $\BDC^2_A+\BDP_A+\BDDP_A$.
\end{coqfact}

\begin{proof}
	We establish each claim separately:
	\begin{itemize}
		\item
		That $\OBDC^2_A$ implies $\BDC^2_A$ is as in \Cref{fact_oac}.
		\item
		To derive $\BDP_A$, assume $P:A\sto\Prop$ and set
		$$ R\,(x,y)\,z~:=~P\,x$$
		for which $\OBDC^2_A$ yields $f:\Nat\sto A$ such that $R$ is total if and only if $R\circ f $ is total, reducing to $P\,x$ for all $x$ if and only if $P\,(f\,n)$ for all $n$.
		So $f$ also witnesses $\BDP^2_A$.
		\item
		To similarly derive $\BDDP_A$, assume $P:A\sto\Prop$ and set
		$$ R\,(x,y)\,z~:=~P\,z$$
		because then any $f$ such that $R$ is total iff $R\circ f $ is total actually yields $P\,x$ for some $x$ iff $P\,(f\,n)$ for some $n$.
		\qedhere
	\end{itemize}
\end{proof}

We summarise the connections of the blurred choice axioms with related principles in the following diagram:

% https://q.uiver.app/#q=WzAsNixbMCwwLCJcXERDIl0sWzIsMCwiXFxDQyJdLFswLDEsIlxcQkRDXjIiXSxbMSwxLCJcXEJEQyJdLFsyLDEsIlxcQkNDIl0sWzAsMiwiXFxEREMiXSxbMCwyXSxbMiw1XSxbMCwxXSxbMSw0XSxbMiwzXSxbMyw0XSxbNSw0LCIiLDEseyJzdHlsZSI6eyJib2R5Ijp7Im5hbWUiOiJkYXNoZWQifX19XSxbMiwxLCIiLDEseyJzdHlsZSI6eyJib2R5Ijp7Im5hbWUiOiJkYXNoZWQifSwiaGVhZCI6eyJuYW1lIjoibm9uZSJ9fX1dLFsxMiwyLCIiLDEseyJzaG9ydGVuIjp7InNvdXJjZSI6MjB9LCJsZXZlbCI6MSwic3R5bGUiOnsiYm9keSI6eyJuYW1lIjoiZGFzaGVkIn19fV0sWzEzLDAsIiIsMSx7InNob3J0ZW4iOnsic291cmNlIjoyMH0sImxldmVsIjoxLCJzdHlsZSI6eyJib2R5Ijp7Im5hbWUiOiJkYXNoZWQifX19XV0=

\[\begin{tikzcd}[column sep=3.15em,row sep=large]
\DC && \CC \\
{\BDC^2} & \BDC & \BCC \\
\DDC
\arrow[from=1-1, to=2-1]
\arrow[from=2-1, to=3-1]
\arrow[from=1-1, to=1-3]
\arrow[from=1-3, to=2-3]
\arrow[from=2-1, to=2-2]
\arrow[from=2-2, to=2-3]
\arrow[""{name=0, anchor=center, inner sep=0}, dashed, no head, from=3-1, to=2-3]
\arrow[""{name=1, anchor=center, inner sep=0}, dashed, no head, from=2-1, to=1-3]
\arrow[shorten <=5pt, dashed, from=0, to=2-1]
\arrow[shorten <=6pt, dashed, from=1, to=1-1]
\end{tikzcd}\]

As with the diagram at the end of \Cref{sec_BDP}, the solid arrows depict (presumably strict) implications while the dashed arrows depict combined equivalences.

\section{Full Analysis of Downward Löwenheim-Skolem}
\label{sec_full}

In this section, we conclude the final decomposition of the $\LS$ theorem into the independent logical principles at play and combinations thereof.

\setCoqFilename{AnalysisLS}
\begin{theorem}[Decomposition][LSiffDC]
	\label{thm_full}
	The following are equivalent:
	\begin{enumerate}
		\coqitem[Decomposition1]
		The $\LS$ theorem
		\coqitem[Decomposition2]
		The conjunction of $\DDC$, $\BCC$, $\BDP$, and $\BDDP$
		\coqitem[Decomposition3]
		The conjunction of $\BDC^2$, $\BDP$, and $\BDDP$
		\coqitem[Decomposition4]
		The principle $\OBDC^2$
	\end{enumerate}
\end{theorem}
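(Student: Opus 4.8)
The plan is to prove the four statements equivalent via the cycle $(1)\Rightarrow(4)\Rightarrow(3)\Rightarrow(1)$ together with the link $(2)\Leftrightarrow(3)$. Two of these are already available. The equivalence $(2)\Leftrightarrow(3)$ is immediate from \Cref{fact_ddc}: since $\DDC+\BCC$ is equivalent to $\BDC^2$, both $(2)$ and $(3)$ amount to a blurred form of $\DC$ conjoined with the common side condition $\BDP+\BDDP$. The implication $(4)\Rightarrow(3)$ is precisely \Cref{fact_obdc}, which already yields $\BDC^2+\BDP+\BDDP$ from $\OBDC^2$. It thus remains to prove $(1)\Rightarrow(4)$ and $(3)\Rightarrow(1)$.

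For $(1)\Rightarrow(4)$, I would exploit that $\OBDC^2$ is the \emph{omniscient} variant whose biconditional comes for free from elementary equivalence, following the reverse pattern of \Cref{sec_LS_classical}. Given $R:A^2\sto A\sto\Prop$, encode it as a model $\M$ over domain $A$ via a ternary relation symbol $P$ with $P^\M\,x\,y\,z := R\,(x,y)\,z$. The $\LS$ theorem yields a countable elementary submodel $\N$, say over $\Nat$, witnessed by an embedding $h:\N\sto\M$. Taking $f := h$, observe that totality of $R$ is expressed by the first-order sentence $\total P$, namely $\dot\forall\dot\forall\dot\exists\,P(\x_2,\x_1,\x_0)$, and since $h$ is an elementary embedding, $\M$ and $\N$ satisfy this sentence in lockstep, i.e.\ in \emph{both} directions. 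Unfolding, $\M\vDash\total P$ is $\total R$ while $\N\vDash\total P$ is $\total{R\circ f}$ --- the latter using $P^\N\,n\,m\,k \leftrightarrow R\,(h\,n,h\,m)\,(h\,k)$ from elementarity --- so $f$ witnesses $\total R\leftrightarrow\total{R\circ f}$ as required by $\OBDC^2$.

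For $(3)\Rightarrow(1)$, the main obstacle, I would refine the proof of \Cref{thm_LS_blurred'} by replacing $\DC$ with $\BDC^2$. By \Cref{thm_LS_blurred} it suffices to equip an arbitrary model $\M$ with a blurred Henkin environment, which I construct as a fixed point of a step relation --- now taken over \emph{pairs} of environments to match the shape of $\BDC^2$:
\[
S\,(\rho_1,\rho_2)\,\rho' ~:=~ (\rho_1\cup\rho_2)\subseteq\rho'\,\land\,\forall\phi.\,[\rho'\text{ is a Henkin blur for }\phi\text{ relative to }\rho_1\cup\rho_2].
\]
Totality of $S$ is established in two steps: $\BDP_\M$ and $\BDDP_\M$ supply, for every formula, a Henkin blur relative to $\rho_1\cup\rho_2$; and $\BCC$ --- which follows from $\BDC^2$ by \Cref{fact_bdc2,fact_bdc} --- gathers these blurs across all formulas into a single successor $\rho'$. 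Here the weakening from $\CC$ to $\BCC$ is harmless: although a blurred choice function only guarantees that every needed blur occurs \emph{somewhere} in the gathered family, collecting the whole family into $\rho'$ via pairing still makes $\rho'$ contain each required blur as a sub-environment. Applying $\BDC^2_{\Nat\sto\M}$ to $S$ then yields $F:\Nat\sto(\Nat\sto\M)$ with $\total{S\circ F}$, and I would collect the stages into $\rho\,\langle n_1,n_2\rangle := F_{n_1}\,n_2$.

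The crux is verifying that this $\rho$ is indeed a blurred Henkin environment, i.e.\ a fixed point of the single-environment condition. As in the final step of \Cref{thm_LS_LEM}, a given $\phi$ has only finitely many free variables, each named by $\rho$ within finitely many stages $F_{n_i}$; using the totality of $S\circ F$ over pairs I can iteratively merge these stages into one stage $F_k$ with all $F_{n_i}\subseteq F_k$ and $S\,(\cdots)\,F_k$, so that $F_k$ carries the Henkin blurs for $\phi$ (up to a permutation substitution aligning evaluation in $\rho$ with evaluation in $F_k$), which then transfer to $\rho$. This iterated merging is exactly where the \emph{pair} structure of $\BDC^2$ is indispensable --- a plain blurred path through $\BDC$ would retain no history to combine stages --- mirroring the role of directedness in \Cref{fact_ddc}. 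I expect the bookkeeping of these merges and substitutions, rather than any conceptual step, to be the principal difficulty.
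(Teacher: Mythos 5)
Your proposal is correct, and all the load-bearing machinery coincides with the paper's: \Cref{thm_LS_blurred}, the step-relation construction of a blurred Henkin environment, \Cref{fact_obdc} for $(4)\Rightarrow(3)$, \Cref{fact_ddc} to link $(2)$ and $(3)$, and the standard reverse pattern (encode $R$ as a model, pass to a countable elementary submodel, reflect totality both ways along $h$) for $(1)\Rightarrow(4)$. The one structural difference is where the hard implication into $(1)$ lands: the paper closes the cycle as $(4)\Rightarrow(3)\Rightarrow(2)\Rightarrow(1)\Rightarrow(4)$ and proves $(2)\Rightarrow(1)$ by applying $\BCC$ for totality of the single-environment step relation $S$, then observing $S$ is directed and invoking $\DDC_{\Nat\sto\M}$; you instead prove $(3)\Rightarrow(1)$ by applying $\BDC^2_{\Nat\sto\M}$ directly to a pair-indexed step relation $S\,(\rho_1,\rho_2)\,\rho'$. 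These are interchangeable: totality of your $S\circ F$ over pairs is exactly the directedness-plus-witnesses data the paper extracts from $\DDC$, and your iterated pairwise merging of the finitely many stages touching $\phi$ is the same bookkeeping the paper performs via directedness of $S\circ F$ (including the permutation substitution aligning $\rho$ with a single stage $F_k$). What the paper's route buys is that the finest decomposition $(2)$ carries the proof and $(3)\Rightarrow(1)$ comes for free; your route makes $(3)$ carry the proof and gets $(2)\Rightarrow(1)$ for free --- both legitimate since \Cref{fact_ddc} is an equivalence. Your observation that the weakening from $\CC$ to $\BCC$ is harmless because the entire gathered family is folded into $\rho'$ by pairing is precisely the point the paper makes in its step (2).
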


\begin{proof}
	We establish a circle of implications:
	\begin{itemize}
		\item
		That (4) implies (3) is by \Cref{fact_obdc}.
		\item
		That (3) implies (2) is by (2) of \Cref{fact_ddc}.
		\item
		That (2) implies (1) is a further refinement of \Cref{thm_LS_blurred'}.
		Again using \Cref{thm_LS_blurred}, we demonstrate how a blurred Henkin environment for any model $\M$ can be obtained as a fixed point of the step function $S$ from before:
		$$S\,\rho\,\rho'~:=\rho\subseteq \rho'\land\forall \phi.\bigwedge\begin{array}{ll}(\forall n.\,\M\vDash_\rho\phi[\rho'\,n])~\to~\M\vDash_\rho\dot\forall\phi\\[0.1cm] \M\vDash_\rho\dot\exists\phi~\to~(\exists n.\,\M\vDash_\rho\phi[\rho'\,n])\end{array}$$
		\begin{enumerate}
			\item
			As before, $\BDP_\M$ and $\BDDP_\M$ yield Henkin blurs $\rho'$ for every formula $\phi$ and environment $\rho$.
			\item
			For totality of $S$, this time using $\BCC_{\Nat\sto\M}$ instead of $\CC_{\Nat\sto\M}$ yields blurred choice functions $f_\forall,f_\exists:\Nat\sto(\Nat\sto\M)$, i.e.~we do not have that $f_\forall\,n$ is a Henkin blur for $\dot\forall \phi_n$ but only know that we can obtain such a Henkin blur by $f_\forall\,m$ for some $m$.
			Yet we can still easily verify that for $\rho_\forall$ and $\rho_\exists$ defined by pairing as before and the choice $\rho':=\rho\cup(\rho_\forall\cup\rho_\exists)$ we have that $S\,\rho\,\rho'$.
			\item
			To obtain a fixed point of $S$ using $\DDC_{\Nat\sto\M}$ instead of $\DC_{\Nat\sto\M}$, we first need to argue that $S$ is directed, which given $\rho_1$ and $\rho_2$ is easily done by using totality on $\rho_1\cup\rho_2$.
			Then from $\DDC_{\Nat\sto\M}$ we obtain $F:\Nat\sto(\Nat\sto\M)$ such that $S\circ F$ is directed and verify that the now familiar choice $\rho\,\langle n1,n2\rangle:=F_{n_1}\,n_2$ is a fixed point of $S$ and thus a blurred Henkin environment:
			the proofs that $Fk\subseteq \rho$ and $S\,F_k\,\rho$ are as before and to conclude $S\,\rho\,\rho$, we now use the directedness of $S\circ F$ to show that for every formula $\phi$ there is $k$ large enough such that $F_k$ already is a Henkin blur for $\rho$.
			For the latter, it is again enough to find $k$ with $\rho\subseteq_\phi F_k$, which is obtained by directedness for the finitely many $F_i$ contributing to the behaviour of $\rho$ on $\phi$.
		\end{enumerate}
	\item 
	That (1) implies (4) follows the same pattern as all reverse proofs from before, using that $\OBDC^2_\Nat$ is provable.
	Assuming $R:A^2\sto A\sto\Prop$ on inhabited $A$ taken as model $\M$, from the $\LS$ theorem we obtain an elementary embedding $h:\N\sto\M$ for a model $\N$ over domain $\Nat$.
	For the interpretation $R^\N$, e.g.~the identity function $f:\Nat\sto\Nat$ satisfies $\total {R^\N}$ iff $\total{R^\N\circ f}$.
	But then by elementarity also $h\circ f$ has that property, i.e.~$\total R$ iff $\total{R^\N\circ (h\circ f)}$ can be derived as desired.
	\qedhere
	\end{itemize}
\end{proof}

Note that all of $\BDC^2$, $\DDC$, and $\BCC$ are also directly implied by the $\LS$ theorem, all following the same pattern as the derivation of $\BDP$ and $\BDDP$ already presented in \Cref{fact_bdp_reverse}.

\section{Generalised Blurred Axioms}
\label{sec_general}

\setCoqFilename{GeneralisedAxioms}

In this section, we study further properties of blurred axioms below $\LEM$ and $\AC$.
On the side of $\LEM$, those are versions of the blurred drinker paradox as well as matching generalisations of Kripke's schema and Markov's principle.
On the side of $\AC$, those are the natural generalisations of $\BDC$ and $\DDC$ for arbitrary blurring types as well as a generalisation of $\BCC$ for arbitrary relation domains.
Ultimately, we expect that these correspond to a suitable generalisation of $\DLS$ to uncountable signatures but only report on preliminary results in that direction.

We start with an observation concerning $\BDP$ for finite blurring types:

\begin{coqfact}[ ][BDP_LEM]
	\label{fact_bdp_bool}
	Already $\BDP^\Bool$ implies $\LEM$.
\end{coqfact}

\begin{proof}
	This in principle follows the same pattern as \Cref{fact_dp_lem} but we give a more type-theoretic version of the proof here for illustration purposes.
	So assuming $\BDP^\Bool$ and a proposition $p:\Bool$, consider the (non-empty) type $A:=p+\Unit$ and the predicate $P:A\to\Prop$ defined by
	\begin{align*}
		P\,(i_1\,H)&:= \bot\\
		P\,(i_2\,\star)&:= \top
	\end{align*}
	for which $\BDP^\Bool$ then yields a Henkin blur $f:\Bool\to A$.
	Now if one of $f\,\btrue$ or $f\,\bfalse$ has the form $i_1\,H$, in either case we obtain a proof $H:p$ and therefore $p$ holds.
	Otherwise, we know that $P\,(f\,b)$ for all $b:\Bool$ and we show $\neg p$ by assuming $p$ and deriving a contradiction.
	By the former and the Henkin property of $f$, we obtain $P\,x$ for all $x:A$ but then by the assumption of $p$ indeed we have some $H:p$ and therefore $P\,(i_1\,H)$, contradiction.
\end{proof}

In fact this observation generalises to arbitrary finite types, as long as the notion of finiteness admits an exhaustive case distinction as in the proof above.

Next, we formally record the overlap between $\BDP$ and $\BDDP$ as mentioned in \Cref{sec_BDP}:

\begin{coqfact}[ ][BDP_BEP]
	If $\BDDP^B_A$ holds for some predicate $P:A\to \Prop$, then $\BDP^B_A$ holds for $\overline P\,x :=\neg P\,x$.
\end{coqfact}

\begin{proof}
	Assume $\BDDP^B_A$ for $P:A\to \Prop$, so there is a Henkin blur $f:B\to A$ for $P$ in the sense of $\BDDP$.
	We show that it is also a Henkin blur for $\overline P$ in the sense of $\BDP$, so assuming that $\neg P\,(f\,y)$ for all $y:B$, we want to show that $\neg P\,x$ for all $x:A$.
	So let $P\,x$ for a contradiction, then the Henkin property for $P$ yields that there is $y:B$ with $P\,(f\,y)$, contradicting the above assumption.
\end{proof}

It seems inplausible that a dual fact holds in the converse direction, given that $\BDDP$ ends in an existential quantifier.
Therefore the existing symmetry of $\DP$ and $\DDP$ is broken by moving to the blurred variants.

We continue with generalised versions of Kripke's schema anticipated in \Cref{sec_BDP}:
\begin{align*}
	\GKS_A~:=~&\forall p:\Prop.\exists f:A\sto\Bool.\,p\leftrightarrow \exists x.\,f\,x=\btrue\\
	\GKS'_A~:=~&\forall p:\Prop.\exists f:A\sto\Bool.\,(p\sto \neg(\forall x.\,f\,x=\bfalse))\land((\exists x.\,f\,x=\btrue)\sto p)
\end{align*}

Note that we recover $\KS$ as $\GKS_\Nat$ and $\KS'$ as $\GKS'_\Nat$ and that, intuitively, the generalised principles decrease in strength with increasing cardinality of $A$.
In the corner case of $A:=\Unit$ (and all other finite types), they have the full strength of $\LEM$:

\begin{coqfact}[ ][GKS_LEM]
	Both $\GKS_\Unit$ and $\GKS'_\Unit$ imply $\LEM$.
\end{coqfact}

\begin{proof}
	For the former claim, assuming $\GKS_\Unit$ and a proposition $p:\Prop$, we obtain a function $f:\Unit \to \Bool$ reflecting $p$ in the sense as above.
	Then by case analysis on $f\,\star$ we can decide whether $p$ or $\neg p$ holds.
	The latter claim is analogous.
\end{proof}

Note that from this perspective, $\KS$ can be seen as a countable blurring of $\LEM$.
Moreover, following the same proof pattern as in \Cref{fact_bdp_ks}, the way how blurred drinker paradoxes imply Kripke's schema generalises directly.

\begin{coqfact}[ ][BDP_GKS]
	\label{fact_BDP_GKS}
	$\BDP^B$ implies $\GKS'_B$ and $\BDDP^B$ implies $\GKS_B$.
\end{coqfact}

\begin{proof}
	Both are exactly as in \Cref{fact_bdp_ks}, with the role of $\Nat$ replaced by $B$.
	For illustration purposes, we discuss the latter case again in a more type-theoretical phrasing.
	So assume $\BDDP^B$ and some proposition $p:B$.
	Consider the (non-empty) type $A:=p+\Unit$ and the predicate $P:A\to \Prop$ defined by
	\begin{align*}
		P\,(i_1\,H)&:= \top\\
		P\,(i_2\,\star)&:= \bot
	\end{align*}
	and obtain a respective Henkin blur $f:B\to A$ from $\BDDP^B$.
	Note that $f$ induces a function $g:B\to \Bool$ by setting $g\,y:=\btrue$ if $f\,y=i_1\,H$ and $g\,y:=\bfalse$ if $f\,y=i_2\,\star$.
	
	Then we show that $g$ witnesses $\GKS_B$ as follows.
	If $p$, so there is $H:p$, then $P\,(i_1\,H)$ holds so the property of $f$ being a Henkin blur yields some $y:B$ such that $P\,(f\,y)$.
	Now if $f\,y=i_1\,H'$ for some other proof of $p$, then by definition $g\,y=\btrue$ as needed for $\GKS_B$ and if $f\,y=i_2\,\star$, then $P\,(f\,y)$ is a contradiction.
	Conversely, if there is $y$ with $g\,y=\btrue$, then by definition $f\,y=i_1\,H$ for some $H:p$ and therefore $p$ holds.
\end{proof}

The matching generalisation of Markov's principle can be given as follows:
$$\GMP_A ~:=~ \forall f:A\sto\Bool.\,\neg\neg(\exists x.\, f\,x=\btrue)\to \exists x.\, f\,x=\btrue$$
Here we recover $\MP$ as $\GMP_\Nat$ and observe that, intuitively, the generalised principle increases in strength with increasing cardinality of $A$.
In the corner case of $A:=\Unit$ (and all other finite types), it becomes provable.

\begin{coqfact}[ ][GMP_unit]
	$\GMP_\Unit$ holds.
\end{coqfact}

\begin{proof}
	Given a function $f:\Unit \to \Bool$ with $\neg\neg(\exists x.\, f\,x=\btrue)$ by case analysis on $f\,\star$ we either obtain $f\,\star=\btrue$ and hence $\exists x.\, f\,x=\btrue$ or it is $f\,\star=\bfalse$ and hence $\neg(\exists x.\, f\,x=\btrue)$, contradicting $\neg\neg(\exists x.\, f\,x=\btrue)$.
\end{proof}

As was the case with $\KS$ and $\MP$, it is straightforward to see that a matching pair $\GKS_B$ and $\MP_B$ implies $\LEM$.
Therefore we conclude decompositions generalising \Cref{fact_BDP_LEM}.

\begin{coqfact}[ ][BDP_GMP]
	$\LEM$ is equivalent to both $\BDP^B+\GMP_B$ and $\BDDP^B+\GMP_B$.
\end{coqfact}

\begin{proof}
	Clearly all involved principles are implied by $\LEM$.
	To argue one of the converse directions, by \Cref{fact_BDP_GKS} we know that $\BDDP^B$ implies $\GKS_B$ and obviously the latter together with $\GMP_B$ implies double negation elimination of arbitrary propositions, hence $\LEM$.
\end{proof}

Note that for the corner case $B:=\Unit$ this also presents a refactorisation of \Cref{fact_dp_lem}, explaining in a more fine-grained way how $\LEM$ can be recovered from $\DP$.

Next moving to generalised blurred choice principles, we first record the generalisations of $\BDC$ and $\DDC$ to arbitrary blurring types instead of just $\Nat$:

\begin{align*}
	\BDC_A^B~:=~&\forall R:A\sto A\sto\Prop.\, \total R \to \exists f:B\sto A.\,\total {R\circ f}\\
	\DDC_A^B~:=~&\forall R: A\sto A \sto \Prop.\, \directed R \to \exists f:B\sto A.\, \directed{R\circ f}
\end{align*}

As for other blurred principles, we can immediately observe that $\BDC_A^A$ and $\DDC_A^A$ are always provable.
Regarding decompositions of $\DC$, however, we see no simple generalisations of results like \Cref{larry_bdc2} as the formulation of $\DC$ hinges on the concrete successor function of $\Nat$ and thus cannot directly be generalised to other types.
Moreover, $\BDC_A^B$ and $\DDC_A^B$ seem a bit unnatural, for instance being refutable for finite blurring types $B$.
In contrast, a generalisation of $\BCC$ to arbitrary domains yields a more meaningful blurring of the axiom of choice as follows:
$$\BAC_{A,B}~:=~\forall R:A\sto B\sto \Prop.\,\total R \to \exists f:A\sto B.\forall x.\exists y.\,R\,x\,(f\,y)$$

For instance, indeed $\AC$ splits into a blurring via $\BAC$ and a concretisation of $\AC$:

\begin{coqfact}[ ][AC_BAC]
	$\AC_A$ is equivalent to $\BAC_A+\AC_{A,A}$.
\end{coqfact}

\begin{proof}
	This is an exact generalisation of \Cref{fact_bcc}, with $A$ playing the role of $\Nat$.
\end{proof}

We end this section by anticipating the expected role of $\BAC$ in generalising our results concerning $\DLS$ to cover uncountable signatures as done by Esp\'indola~\cite{espindolalowenheim} and Karagila~\cite{karagila2014downward}.
First, for a suitable statement of $\DLS$, we restrict to signatures only containing unary predicate symbols, therefore allowing them to be identified with their underlying type $B$ of symbols without loss of information, and that are infinite in the strong sense that $B$ and the type $\Form$ of formulas over $B$ are isomorphic.
For such $B$, we let $\DLS_B$ state that for every model $\M$ there is an elementary embedding $\N\preceq \M$ from a model $\N$ with the type $B$ as domain.
We then obtain the following reverse results:

\begin{coqfact}[ ][DLS_rev]
	\label{fact_BAC}
	Let $B$ be as above. Then $\DLS_B$ implies $\BDP^B$, $\BDDP^B$, $\BDC^B$, $\DDC^B$, and $\BAC_B$.
\end{coqfact}

\begin{proof}
	All follow the by now usual pattern, we only discuss the last claim in detail.
	So let $R:B\to C\to \Prop$ be a total relation.
	We equip $C$ with the structure of a model $\M$ for $B$ seen as a unary signature by setting $P^\M:=R\,P$ for every $P:B$ seen as predicate symbol.
	Then by $\DLS_B$ we obtain a model $\N$ over domain $B$ such that $\N\preceq \M$, witnessed by an elementary embedding $h:B\to C$ that we show to be a blurred choice function in the sense of $\BAC_B,C$ as follows.
	
	First note that by totality of $R$, the model $\M$ satisfies the formulas $\exists \x. P(\x)$ for all $P:B$, which by elementarity is reflected down to $\N$ thus providing some $P':B$ such that $P^\N\,P'$ holds.
	Now by elementarity of $h$ it follows that $P^\M\,(h\,P')$ must hold, therefore $R\,P\,(h\,P')$ as needed to establish $h$ as blurred choice function.
\end{proof}

We leave it for future work to state and prove a forwards direction for an exact classification of the strength of $\DLS$ at arbitrary cardinalities.

\section{Separating Blurred Axioms}
\label{sec_separate}

\setCoqFilename{SeparatingAxioms}

In this semi-technical section, we collect evidence for our implicit understanding of the newly introduced axioms: none of the blurred axioms globally imply their original counterpart, none of them overlap in a detectable way, and especially the separation between the fragments of $\AC$ and those of $\LEM$ remains intact.
We only state results as facts, that allow for a mostly internal proof and end in a well-understood underivability of another principle in a similar enough setting -- all other results are stated as conjectures with or without proof sketches.

We begin with the observation that instances of $\BDP$ may be strict weakenings of $\DP$:

\begin{coqfact}[ ][BDP_DP_local]
	$\BDP_A$ does not in general imply $\DP_A$.
\end{coqfact}

\begin{proof}
	$\BDP_\Nat$ is trivially provable while $\DP_\Nat$ implies $\LPO$ (\Cref{fact_lpo}), which is well-known to be independent, for instance invalid in Hyland's effective topos~\cite{hyland1982effective}.
\end{proof}

The next conjecture states an expected strengthening for Boolean blurring.

\begin{conjecture}[ ][BDP_DP_bool]
	$\BDP^\Bool_A$ does not in general imply $\DP_A$.
\end{conjecture}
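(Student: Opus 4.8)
The plan is to exhibit a single inhabited type $A$ for which $\BDP^\Bool_A$ is provable while $\DP_A$ is not, mirroring the separation of $\BDP$ from $\DP$ recorded just above but now replacing the role of $\LPO$ by the weak excluded middle $\forall p:\Prop.\,\neg p\lor\neg\neg p$ (WLEM). Concretely, I would take $A:=\Bool$. Then $\BDP^\Bool_\Bool$ is an instance of $\BDP^A_A$ and hence provable by \Cref{fact_bdp}, witnessed by the identity $f:\Bool\sto\Bool$: the hypothesis $\forall y.\,P\,(f\,y)$ is then literally $\forall x.\,P\,x$, so the implication is trivial. This settles the provable side immediately.

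For the non-provable side, I would first show by a routine internal argument that $\DP_\Bool$ implies WLEM. Given $p:\Prop$, set $P\,\btrue:=p$ and $P\,\bfalse:=\neg p$, so that $\forall y:\Bool.\,P\,y$ reduces to $p\land\neg p$ and is therefore absurd. The Henkin witness $x$ supplied by $\DP_\Bool$ (using the maximal-scope reading $\exists x.\,(P\,x\to\forall y.\,P\,y)$) then satisfies $\neg P\,x$, which after case analysis on the decidable $x:\Bool$ yields $\neg p$ if $x=\btrue$ and $\neg\neg p$ if $x=\bfalse$. As this construction is uniform in $p$, it establishes $\DP_\Bool\to\forall p.\,\neg p\lor\neg\neg p$.

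It then remains to invoke the independence of WLEM in a model of the meta-theory. Here I would appeal to the same setting as in the preceding fact, namely Hyland's effective topos, which is well known to fail to be De~Morgan and hence to refute WLEM internally, while validating every constructively provable statement and in particular $\BDP^\Bool_\Bool$. Since the implication $\DP_\Bool\to\text{WLEM}$ is a meta-theorem, any model refuting WLEM also refutes $\DP_\Bool$; consequently $\BDP^\Bool_\Bool$ holds but $\DP_\Bool$ fails in that model, proving the stated non-implication.

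The two reductions carry no real difficulty; the only delicate point is the semantic one, i.e.\ securing a model of the chosen underspecified meta-theory that validates all provable principles yet refutes WLEM, for which the non-De~Morgan effective topos serves exactly as the refutation of $\LPO$ did before. I expect the harder and more interesting variant, which I would leave open, to be the analogue for an infinite domain such as $A:=\Nat$: there both $\BDP^\Bool_\Nat$ and $\DP_\Nat$ already imply $\LPO$ (by a finite case split on the two blurred witnesses), so the effective topos cannot separate them, and one would instead need a bespoke model validating $\BDP^\Bool_\Nat$ while refuting $\DP_\Nat$.
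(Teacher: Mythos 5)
Your proposal is correct and, for the crucial non-provability step, takes a genuinely different route from the paper. Both arguments agree on the choice $A:=\Bool$ and on the trivial provability of $\BDP^\Bool_\Bool$ as an instance of $\BDP^A_A$ (\Cref{fact_bdp}). Where you differ is in refuting $\DP_\Bool$: the paper sketches a direct counter-model in the four-element diamond Heyting algebra, assigning $P\,\btrue$ and $P\,\bfalse$ the incomparable values $a$ and $b$ and arguing that neither concrete Boolean can serve as Henkin witness. That sketch tacitly treats the internal existential as ranging over global elements of $\Bool$; interpreted as a join, $\exists x.(P\,x\to\forall y.P\,y)$ actually evaluates to $(a\to\bot)\lor(b\to\bot)=b\lor a=\top$ in that algebra, which is presumably part of why the statement is only a conjecture. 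Your reduction sidesteps this delicacy: the internal derivation $\DP_\Bool\to\forall p.\,\neg p\lor\neg\neg p$ (via $P\,\btrue:=p$, $P\,\bfalse:=\neg p$, so that $\forall y.P\,y$ is absurd and case analysis on the Boolean witness yields $\neg p$ or $\neg\neg p$) is airtight, and the failure of $\mathsf{WLEM}$ in Hyland's effective topos --- the same model the paper already invokes to refute $\LPO$ --- is a standard fact (the effective topos is not De Morgan). What your approach buys is a separation resting on a well-established independence rather than on an informal global-element reading of a Heyting-valued existential; what the paper's approach buys, if it can be made precise (e.g.\ via an existence property of the meta-theory), is a more elementary, purely algebraic counter-model. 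Your closing remark that $\BDP^\Bool_\Nat$ already implies $\LPO$ by case-splitting on the two blurred witnesses is also consistent with the paper's \Cref{fact_bdp_bool}, and correctly identifies why the effective topos cannot separate the principles over $\Nat$.
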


\begin{proof}[Sketch]
	Note that $\BDP^\Bool_\Bool$ is trivially provable, we however believe that $\DP_\Bool$ is not.
	To this end, consider the following complete Heyting algebra:
	$$\begin{tikzcd}
		& \top \ar[dr] &   \\
		a \ar[ur]  
		\ar[rd]   &           & b \\
		& \bot \ar[ur] &   \\
	\end{tikzcd}$$
	\vspace{-0.9cm}
	
	We use it to define a predicate $P:\Bool\to \Prop$ with truth values $P\,\btrue:=a$ and $P\,\bfalse:=b$.
	If $\DP_\Bool$ were provable, then there would be a Henkin witness for $P$, say $\btrue$ wlog.
	But then by the Henkin property, $P\,\btrue$, of truth value $a$, would imply $\forall x.\,P\,x$, of truth value $\bot$.
	However, this implication overall evaluates to $b$ in the given algebra and not to $\top$.
\end{proof}

While the previous two statements concern local instances of $\BDP$, the next one expresses the stronger result that also all instances collectively still remain below full classical logic.

\begin{conjecture}[ ][BDP_]
	$\BDP$ does not imply $\MP$ so in particular not $\LEM$.
\end{conjecture}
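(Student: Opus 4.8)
The plan is to exhibit a single model of the ambient constructive type theory in which every instance of $\BDP$ (for every inhabited $A$ and with blurring type $\Nat$) is validated, yet $\MP$ fails. Since $\LEM$ implies $\MP$ trivially, refuting $\MP$ in a model validating $\BDP$ automatically shows $\BDP$ cannot imply $\LEM$ either, giving the "in particular" clause for free. The natural candidate is a realizability or topos model; given the cited use of Hyland's effective topos~\cite{hyland1982effective} for the earlier separation, one would first check whether the effective topos already works, since there $\MP$ is known to \emph{hold}, so that model is immediately disqualified and a different one is needed. The search therefore shifts to models where $\MP$ is refuted, such as a suitable sheaf or Kripke model, or a function-realizability (Kleene's second algebra) topos, where $\MP$ classically fails.

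First I would fix the target model $\mathcal{E}$ to be one in which $\MP$ provably fails --- the standard choice being a realizability topos over a non-effective combinatory algebra, or a presheaf/sheaf model over an appropriate site, where one has an explicit "generic" Boolean sequence that is not-not-ever-true but never actually true, refuting $\MP$. The main work is then to verify that $\BDP^\Nat_A$ holds internally in $\mathcal{E}$ for all inhabited $A$. The decisive structural feature to exploit is that $\BDP$ only requires producing a \emph{countable blur} $f:\Nat\to A$ with $(\forall y.\,P(f\,y))\to\forall x.\,P\,x$, and crucially this blur need not be a genuine choice function: it suffices that the countably many values $f\,y$ collectively "cover" a witnessing element in the sense of the drinker property. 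I expect that in many such models one can always take $f$ to \emph{enumerate a dense or generic countable subset} of $A$, or even to use the internal fact $\BDP^\Nat_\Nat$ (provable by Fact~\ref{fact_bdp}(1)) composed via Fact~\ref{fact_bdp}(2) with a countable presentation of $A$ available in the model.

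The hard part will be establishing $\BDP^\Nat_A$ uniformly for \emph{all} inhabited $A$, not merely for countable or well-behaved $A$. Unlike the trivial case $\BDP^\Nat_\Nat$, a general object $A$ in the model carries no a priori enumeration, so one cannot simply invoke Fact~\ref{fact_bdp}(2). The strategy I would pursue is to show that in the chosen model the internal logic satisfies a \emph{weak omniscience-free drinker principle}: because the model refutes $\MP$ precisely by lacking the ability to extract witnesses from double-negated existentials, it should also lack the obstruction that makes $\DP_A$ fail, while the blurring quantifier $\exists f:\Nat\to A$ is weak enough to be realized by a canonical covering construction. Concretely, one would analyze the realizers/forcing conditions for $\exists f:\Nat\to A.\,(\forall y.\,P(f\,y))\to\forall x.\,P\,x$ and argue that the trivial realizer --- packaging the identity-like covering together with the internal implication, which is constructively valid once no actual witness extraction is demanded --- always works. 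The genuine risk is that for some pathological $A$ the covering fails to be internally definable as a map $\Nat\to A$; if so, the fallback is to prove the weaker but still sufficient statement that $\BDP$ (the conjunction over inhabited $A$) holds, possibly by first reducing to a restricted but representative class of objects and then arguing by the model's closure properties. Since this separation is explicitly flagged in the paper as a conjecture, I would expect to present it with a proof sketch identifying this uniformity-over-$A$ step as the crux rather than a complete verification.
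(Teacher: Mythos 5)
Your overall shape matches the paper's own sketch: both look for a single model of the ambient type theory that refutes $\MP$ while validating every instance of $\BDP$, and both want to reduce the latter to the provable case $\BDP^\Nat_\Nat$ by arguing that arbitrary types in the model ``behave as if countable''. You also correctly disqualify the effective topos (where $\MP$ holds) and correctly isolate the crux, namely producing the blur $f:\Nat\to A$ uniformly in $A$ when $A$ carries no internal enumeration. Where the two sketches diverge is exactly at that crux. The paper's proposed mechanism is a \emph{quote operator} in a realisability model: a reflection of syntax/realisers that lets every type be enumerated by codes, so that the identity-style blur witnessing $\BDP^\Nat_\Nat$ transports to arbitrary $A$ via the composition property of \Cref{fact_bdp}; the residual difficulty the paper flags is that quoting is developed mainly in Krivine-style \emph{classical} realisability (which validates $\LEM$), so one must either carve out the quoting fragment or extend constructive quoting \`a la P\'edrot to all types. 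Your sketch leaves the corresponding step as an unspecified ``canonical covering construction'' or ``dense/generic countable subset'', which is precisely the part that does not come for free; in particular, the suggestion that a ``trivial realizer \dots always works'' cannot be right as stated, since $\BDP$ is not constructively provable and some genuine structural feature of the model must be exploited.

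Two further concrete points. First, function realisability over Kleene's second algebra is not a viable candidate: like number realisability it validates $\MP$ (the realiser performs unbounded search, whose termination is guaranteed classically at the meta-level), so the models refuting $\MP$ are rather of the modified-realisability or sheaf/topological kind --- and in those your covering construction for arbitrary $A$ is even less clear, since such models tend to have genuinely uncountable objects with no canonical countable cover. Second, \Cref{fact_bdp_bool} is a useful sanity check on any candidate realiser: since $\BDP^\Bool$ already implies $\LEM$, whatever blur you construct must make essential use of the infinitude of $\Nat$, which rules out several of the ``package the identity-like covering'' candidates you describe.
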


\begin{proof}[Sketch]
	We expect that $\BDP$ can be validated in realisability models providing a quote operator that allows for reflecting the syntax of formulas and terms.
	In such a setting, all types effectively behave as if they were countable, thus reducing $\BDP$ to the provable case of $\BDP_\Nat$.
	Such quote operators are studied in the context of classical realisability~\cite{krivine2009realizability} but naturally in this setting full classical logic is validated.
	To simultaneously refute $\MP$ and therefore $\LEM$ one would have to either restrict classical realisability to its quoting mechanism, or extend more constructive treatments of quoting like~\cite{pedrot2024upon} to apply to arbitrary types.
\end{proof}

Recall that the analogous statement for $\BDP^\Bool$ was refuted in \Cref{fact_bdp_bool}.
Next, turning to choice principles, we first record a similar result concerning instances of $\BCC$ and $\CC$.

\begin{coqfact}[ ][BCC_CC]
	$\BCC_A$ does not in general imply $\CC_A$.
\end{coqfact}

\begin{proof}
	$\BCC_\Nat$ is trivially provable while $\CC_\Nat$ is well-known to be independent, for instance invalidated in non-deterministic realisability models~\cite{cohen2019effects}.
\end{proof}

Again, we believe that this scales to a global separation of $\BCC$ from $\CC$.

\begin{conjecture}[ ][]
	\label{conj_BCC}
	$\BCC$ does not imply $\CC_\Nat$, so in particular not $\CC$.
\end{conjecture}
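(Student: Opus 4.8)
The plan is to exhibit a single model of the meta-theory that refutes $\CC_\Nat$ yet validates $\BCC_A$ uniformly for all inhabited $A$. The natural candidate is a non-deterministic realisability model, such as the one underlying the local separation above~\cite{cohen2019effects}: there $\CC_\Nat$ fails because a non-deterministic totality witness for a relation $R:\Nat\sto\Nat\sto\Prop$ returns, on input $n$, some witness that may differ from run to run, and such behaviour cannot be collapsed into a single deterministic, input-indexed choice function $f$ with $R\,n\,(f\,n)$ for all $n$. The key observation is that the \emph{blurred} conclusion of $\BCC_A$ never demands such a collapse: instead of selecting one witness per input, it suffices to produce a countable $f:\Nat\sto A$ whose range covers a witness for every $n$. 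We therefore expect the non-determinism obstructing $\CC_\Nat$ to be harmlessly absorbed by the blur.

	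First I would make precise the sense in which every inhabited type $A$ is enumerable in the model, i.e.~that there is $e:\Nat\sto A$ covering all realizable inhabitants of $A$. As in the sketch for the analogous $\BDP$ conjecture above, the cleanest route is a quote operator reflecting realizers as natural numbers, from which such an enumeration is read off. Given $e$, the instance $\BCC_A$ for a total $R:\Nat\sto A\sto\Prop$ is realized by simply taking $f:=e$: for each $n$, the totality realizer yields some $a:A$ together with a realizer of $R\,n\,a$, and since $a$ occurs in the range of $e$, with the quote operator supplying an index $m$ such that $e\,m=a$, the blurred witness $\exists m.\,R\,n\,(f\,m)$ is immediate. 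In other words, enumerability of all targets reduces $\BCC$ to the provable case $\BCC_\Nat$.

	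Second I would check that this enumeration mechanism does not inadvertently reinstate $\CC_\Nat$. The decisive asymmetry is that $\CC_\Nat$ already fails for the \emph{countable} target $\Nat$: an enumeration of the target provides no deterministic, input-indexed selection, and it is exactly the latter that the non-deterministic effects prevent. Hence enumerability of all types is entirely compatible with the failure of $\CC_\Nat$, and the two ingredients do not interfere at the level of what each principle asks for.

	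The hard part, exactly as flagged for the $\BDP$ conjecture, will be to realise both features \emph{in one and the same model}. Quote operators are classically available in Krivine's classical realisability~\cite{krivine2009realizability}, where, however, full choice---and a fortiori $\CC_\Nat$---is validated, so that framework cannot refute $\CC_\Nat$ as it stands. To obtain enumerability of all types while keeping $\CC_\Nat$ false, one would have to graft the quoting mechanism onto a non-deterministic framework like the one behind the local separation, or dually equip a constructive treatment of quoting such as~\cite{pedrot2024upon} with the effects needed to break deterministic choice. Verifying that the quote-based enumeration remains uniformly available across all types once non-determinism is present, and that it suffices to realise $\BCC$ while leaving $\CC_\Nat$ refuted, is the delicate point we leave open.
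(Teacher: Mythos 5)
Your sketch targets the same model as the paper's -- the non-deterministic realisability model of~\cite{cohen2019effects}, where $\CC_\Nat$ already fails -- but the mechanism you propose for validating $\BCC$ is genuinely different and, arguably, heavier than necessary. The paper's sketch realises $\BCC_A$ \emph{locally}: given the non-deterministic realiser of $\total R$, one accumulates all its potential computation outcomes on each input $n$ into the blur, so the blurred choice function is read off from the specific totality witness and no global structure on $A$ is needed. You instead realise $\BCC_A$ \emph{globally}, by postulating a quote-based enumeration $e:\Nat\sto A$ of all realizable inhabitants of every type and taking $f:=e$. This does make the blurred conclusion immediate, but it raises the bar for the model construction considerably: you now need quoting and non-determinism to coexist in one model, which is exactly the open difficulty the paper reserves for the (independent) $\BDP$ conjecture, whereas the accumulation idea only needs the non-deterministic model as it already exists. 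Two further points to be careful about. First, your claim that Krivine's classical realisability validates ``full choice --- and a fortiori $\CC_\Nat$'' as it stands is too strong: choice principles there are realised only via additional instructions such as quote, and indeed the paper cites Castro~\cite{CastroThesis}, who uses classical realisability to separate an instance of $\BCC$ (his collection axiom) from $\CC$ -- evidence that this framework \emph{can} refute $\CC_\Nat$ and directly supports the conjecture without your detour. Second, even granting quote, producing a realiser of $\exists m.\,R\,n\,(e\,m)$ from a realiser of $R\,n\,a$ requires both extracting the index $m$ and transporting the realiser along the identification $e\,m=a$ inside the model; this is precisely where the delicacy lies and deserves to be flagged as such rather than called ``immediate''. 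None of this makes your route wrong -- it is a coherent alternative sketch -- but the paper's accumulation strategy is the more economical path to the same separation.
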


\begin{proof}[Sketch]
	We expect that $\BCC$ holds in the realisability model of~\cite{cohen2019effects}, therefore providing the desired separation.
	The idea is that, while simple non-determinism introduced by a coin flip blocks a realiser for a totality proof for a relation $R:\Nat\to A\to\Prop$ from being turned into a choice function $f:\Nat\to A$, therefore refuting already $\CC_\Nat$, it should be possible to extract a blurred choice function $f:\Nat\to A$ that accumulates all potential computation outcomes, therefore witnessing $\BCC$.
	Additional evidence is provided by a separation result by Castro~\cite{CastroThesis}, who independently identifies and separates an instance of $\BCC$ he calls collection axiom, using classical realisability~\cite{krivine2009realizability}.
\end{proof}

Additionally, our decomposition of $\DC$ relies on the assumption that $\BCC$ and $\DDC$ are not inter-derivable.
The simple direction can be argued as follows:

\begin{coqfact}[ ][BCC_DDC]
	$\BCC$ does not imply $\DDC$.
\end{coqfact}

\begin{proof}
	If this were provable, then via \Cref{larry_bdc2} also $\CC$ would imply $\DC$, which is well-known not to be the case, even for ZF set theory~\cite{jech2008axiom}.
\end{proof}

To date, we are not aware of any strategy for the converse direction, so we just state it hear for the sake of completeness.

\begin{conjecture}[ ][]
	$\DDC$ does not imply $\BCC$.
\end{conjecture}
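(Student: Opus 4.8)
The plan is to construct a model of the meta-theory in which $\DDC$ is valid while $\BCC_A$ fails for at least one $A$. First I would record a structural constraint that rules out the easy attempts: over classical logic, $\BCC$ collapses to $\CC$. Indeed, $\CC_\Nat$ is provable whenever $\LEM$ holds, by selecting least witnesses through the well-order of $\Nat$ (each instance of the relation being a decidable proposition), so by \Cref{fact_bcc} we get $\CC\leftrightarrow\BCC$ in any classical model. A separating model must therefore either be genuinely intuitionistic, or be a classical model of $\DDC$ in which $\CC$ fails; since $\DC\leftrightarrow\DDC+\CC$ by \Cref{larry_bdc2}, the latter would be a classical model with $\DDC$ but neither $\DC$ nor $\CC$. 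I expect no such classical model exists, because the iterative upper-bound construction witnessing $\DDC$ already needs a fragment of choice; hence I would focus on an intuitionistic model in which enough choice is broken to refute $\BCC$.

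The key asymmetry I would exploit is that the witnesses produced by directedness are mutually compatible, whereas those required by a blurred choice function need not be. In a sheaf or realizability semantics an existential only guarantees local witnesses; for $\DDC$ the local upper bounds supplied by $\directed R$ can be merged, since any two of them admit a further common upper bound, so partial countable directed subrelations built on the pieces of a cover accumulate into a global $f:\Nat\sto A$ with $\directed{R\circ f}$. For $\BCC$, by contrast, the local witnesses for the different fibres $R\,n$ carry no compatibility, so a model in which a total relation $R:\Nat\sto A\sto\Prop$ is spread generically across the site can block any single global blur $f:\Nat\sto A$ from meeting every fibre. I would search for the simplest such $R$ and $A$ making $\BCC_A$ fail while keeping the directed-gluing argument for $\DDC$ intact.

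For a concrete realization I would start from the effectful realizability model of~\cite{cohen2019effects} as a testbed. That model refutes $\CC_\Nat$, but as anticipated in \Cref{conj_BCC} it is expected to \emph{validate} $\BCC$, precisely because its non-determinism lets a realizer accumulate all outcomes of a totality proof into one blurred choice function. The separation therefore requires a strictly stronger failure of choice: a model whose computational effects cannot be accumulated into a countable cover, yet still permit the compatible, upward-directed accumulation underlying $\DDC$. I would try to engineer this either by moving to a sheaf topos over a site with no global sections for the relevant family, or by augmenting the realizability structure with an effect (a non-enumerable, non-backtrackable source of genericity) that obstructs the $\BCC$-accumulation of~\cite{cohen2019effects} while leaving the directedness construction available.

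The hard part will be to satisfy these two requirements simultaneously. Every model I know of that refutes a $\CC$-style principle strongly enough to also refute $\BCC$ tends, for the same reason, to refute $\DDC$ as well, since the directed construction ultimately also selects witnesses; conversely, models tame enough to validate $\DDC$ typically validate $\BCC$, as for~\cite{cohen2019effects}. Turning the informal slogan ``upper bounds glue, arbitrary witnesses do not'' into an actual model validating $\DDC$ and refuting $\BCC_A$ for a specific $A$ is the crux, and is the reason we leave this as a conjecture.
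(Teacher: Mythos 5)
The paper offers no proof of this conjecture: it explicitly states that the authors are not aware of any strategy for this direction and record the statement only for the sake of completeness. Your proposal is therefore not competing with an actual argument, and to your credit it ends by acknowledging that it does not close the problem either. What you give is a roadmap for where a separating model might live (a sheaf or realizability model in which directed upper bounds can be glued while arbitrary fibrewise witnesses cannot), together with a correct diagnosis of why the obvious candidates fail --- in particular that the effectful realizability model of Cohen et al.\ is expected to \emph{validate} $\BCC$ (cf.\ \Cref{conj_BCC}) and hence cannot separate, and that models breaking choice hard enough to refute $\BCC$ tend to break $\DDC$ as well. This analysis is sensible and goes somewhat beyond what the paper records, but it is a research programme, not a proof.

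One concrete inaccuracy should be fixed: you claim that $\CC_\Nat$ is provable from $\LEM$ alone by selecting least witnesses, and hence that $\BCC$ collapses to $\CC$ in every classical model via \Cref{fact_bcc}. In the intensional type-theoretic setting of this paper that step additionally requires unique choice to turn the (unique) least witness of each fibre into an actual function $\Nat\sto\Nat$; the paper is explicit about this in the proof that $\LEM$ does not imply $\BCC$, where the hypothesis is ``$\LEM$ together with unique choice''. Consequently your structural constraint on candidate models is stated too strongly: a classical model lacking unique choice could in principle refute $\CC_\Nat$ and remain a candidate. This does not derail your main line, which is to look for an intuitionistic or effectful model anyway, but the bottom line stands: neither your sketch nor the paper establishes the separation, and the conjecture remains open.
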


Note that this would in particular confirm that $\DDC$ does not imply $\DC$.
We expect that this even applies to $\BDC^2$, witnessed by the following strengthening.

\begin{conjecture}[ ][]
	$\BDC^2$ does not imply $\CC_\Nat$, so in particular not $\DC$.
\end{conjecture}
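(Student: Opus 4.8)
The plan is to validate $\BDC^2$ while refuting $\CC_\Nat$ in a single realisability model, for which the non-deterministic model of~\cite{cohen2019effects} already proposed for \Cref{conj_BCC} is the natural candidate. By \Cref{fact_ddc}, $\BDC^2$ is equivalent to $\DDC+\BCC$, and \Cref{conj_BCC} asserts that this model validates $\BCC$ and refutes $\CC_\Nat$. Hence the separation reduces to the single new obligation of validating $\DDC$ there; the addendum that $\BDC^2$ does not imply $\DC$ is then immediate, since $\DC$ implies $\CC$ and hence $\CC_\Nat$ by \Cref{fact_ac}.

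The heuristic reason to expect $\DDC$ to survive the non-determinism that defeats $\CC_\Nat$ is that directedness only requires a common upper bound of each pair, an existential obligation, rather than a globally consistent selection. Concretely, given a realiser of $\directed R$ for a relation $R:A\sto A\sto\Prop$, which on each non-deterministic branch returns a common upper bound of two inputs together with realisers of the two corresponding instances of $R$, I would construct the blurred sub-relation $f:\Nat\sto A$ by a saturation procedure: fix a seed, and at each stage feed every pair of already-collected elements through the directedness realiser, accumulating every branch's output into one enumeration via pairing, exactly as $\BCC$ drives the step relation in part~(2) of \Cref{fact_ddc}. Since all branches are retained, any collected upper bound of $f\,n$ and $f\,m$ already witnesses both $(R\circ f)\,n\,k$ and $(R\circ f)\,m\,k$, so no consistent choice among branches is needed; the very feature that blocks $\CC_\Nat$ is thus one that $\DDC$ can tolerate.

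The main obstacle will be the realisability bookkeeping required to turn this intuition into a uniform realiser for $\directed{R\circ f}$: from indices $n$ and $m$ one must computably locate a stage at which a common upper bound of $f\,n$ and $f\,m$ has been collected and return its index $k$ together with the transported realisers, all while the underlying computation remains non-deterministic. This amounts to an induction on the stage structure analogous to the bookkeeping in part~(2) of \Cref{fact_ddc} and in \Cref{thm_full}, but carried out internally in the model; the delicate point is ensuring that the pairing-based enumeration and the branch accumulation interact so that the index lookup is a genuine effect-free function of $n$ and $m$. I expect this to go through, but checking it against the precise realisability structure of~\cite{cohen2019effects} is what keeps the statement a conjecture for the time being.
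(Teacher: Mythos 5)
Your sketch follows essentially the same route as the paper's: both propose validating $\BDC^2$ in the non-deterministic realisability model of Cohen et al.\ (where $\CC_\Nat$ already fails) by accumulating all computation branches into a single blurred enumeration, exactly as in the sketch for the $\BCC$ separation. Your additional step of factoring through the constructive equivalence $\BDC^2\leftrightarrow\DDC+\BCC$ so that $\DDC$ is the only new obligation is a reasonable sharpening, but the core idea and the remaining verification burden are the same, and like the paper's version it remains a sketch rather than a proof.
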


\begin{proof}[Sketch]
	It is conceivable that $\BDC^2$ holds in the realisability model of~\cite{cohen2019effects} by a similar argument as above in the proof sketch for \Cref{conj_BCC}.
\end{proof}

Finally, concerning the separation of choice principles and fragments of the excluded middle, we observe that axioms up to $\DC$ cannot imply $\BDP$ as an unwanted side effect, even in extensional settings that would trigger the well-known implication from $\AC$ to $\LEM$.

\begin{coqfact}[ ][DC_BDP]
	$\DC$ does not imply $\BDP$, even in an extensional setting.
\end{coqfact}

\begin{proof}
	Hyland's effective topos~\cite{hyland1982effective} is an extensional model satisfying $\DC$ and $\MP$ but invalidating $\LPO$.
	If it were to also satisfy $\BDP$ by the assumed implication, then by \Cref{fact_BDP_LEM} it would also satisfy $\LEM$ and thus in particular $\LPO$.
\end{proof}

Conversely, we can show that $\BCC$ still contains some non-trivial amount of the axiom of choice, therefore not following even from full $\LEM$.

\begin{coqfact}[ ][LEM_BCC]
	$\LEM$ does not imply $\BCC$, even in a setting with unique choice.
\end{coqfact}

\begin{proof}
	As $\LEM$ together with unique choice implies $\CC_\Nat$, the supposed implication together with \Cref{fact_bcc} would mean that $\LEM$ alone proves $\CC$, which is again well-known not to be the case, even for ZF set theory~\cite{jech2008axiom}.
\end{proof}

We end with the only stated conjecture that the merge of all blurred principles into $\OBDC^2$ still does not trivialise either of the orthogonal logical dimensions.

\begin{conjecture}[ ][]
	$\OBDC^2$ implies neither $\MP$ nor $\CC_\Nat$.
\end{conjecture}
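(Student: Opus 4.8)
The plan is to reduce both underivabilities to the construction of suitable realisability models, exploiting the decompositions of \Cref{thm_full}, which give $\OBDC^2 \leftrightarrow \BDC^2 + \BDP + \BDDP$ and $\OBDC^2 \leftrightarrow \DDC + \BCC + \BDP + \BDDP$. Two preliminary observations streamline the targets. First, since $\OBDC^2$ implies $\BDP$ (\Cref{fact_obdc}), \Cref{fact_BDP_LEM} shows that in any model of $\OBDC^2$ the principle $\MP$ holds if and only if $\LEM$ does, as $\LEM\to\MP$ anyway and $\BDP+\MP$ recovers $\LEM$; hence separating $\OBDC^2$ from $\MP$ amounts to exhibiting a genuinely non-classical model of $\OBDC^2$. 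Second, since $\OBDC^2$ implies $\BCC$ (via \Cref{fact_obdc,fact_bdc2,fact_bdc}), \Cref{fact_bcc} shows that in such a model $\CC_\Nat$ holds if and only if $\CC$ does, as $\CC\to\CC_\Nat$ anyway and $\BCC+\CC_\Nat$ recovers $\CC$; hence separating $\OBDC^2$ from $\CC_\Nat$ amounts to exhibiting a model of $\OBDC^2$ in which $\CC$ fails.

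For the separation from $\MP$, the plan is to use a constructive realisability model equipped with a quote operator reflecting every type into the syntax, so that all types effectively behave as countable ones. In such a setting the blurred principles $\BDP^\Nat_A$, $\BDDP^\Nat_A$, and $\BDC^2_A$ collapse to their provable countable instances, validating $\OBDC^2$, while the underlying model stays constructive and thus refutes $\LEM$; by the first observation this already refutes $\MP$. Following the sketch accompanying the separation of $\BDP$ from $\MP$, the quote mechanism would be obtained by extending constructive treatments of quoting~\cite{pedrot2024upon} from $\Nat$ to arbitrary types, rather than by Krivine's classical quote~\cite{krivine2009realizability}, which validates full $\LEM$ and hence $\MP$.

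For the separation from $\CC_\Nat$, the plan is to reuse the non-deterministic realisability model of~\cite{cohen2019effects} already conjectured to refute $\CC_\Nat$ while validating $\BCC$ and $\BDC^2$ (\Cref{conj_BCC} and the analogous conjecture for $\BDC^2$). The additional requirement is that this model also validate $\BDP$ and $\BDDP$; the intended route is to equip it with the same quote operator as above, so that the drinker fragments hold through the induced countability while the coin-flip non-determinism still blocks the extraction of a genuine choice function, thereby refuting $\CC$ and hence $\CC_\Nat$ by the second observation.

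The main obstacle lies in reconciling these two mechanisms: quote operators are the natural tool for validating the blurred axioms by forcing countability, but they are classically flavoured and tend to validate the very principles $\MP$ and $\CC$ we wish to refute, whereas the constructive and effectful ingredients that refute $\MP$ and $\CC$ tend to obstruct a uniform quote. Making either construction precise -- a constructive quote over arbitrary types, or a quote compatible with the non-determinism of~\cite{cohen2019effects} -- is exactly the frontier flagged in the preceding conjectures, and we expect the techniques there to transfer once such a model is available.
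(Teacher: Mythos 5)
The paper does not prove this statement: it is deliberately left as a bare conjecture, with no proof sketch at all (the text introduces it as ``the only stated conjecture'' precisely to signal that no strategy is offered). So there is no argument of the authors' to compare yours against, and the relevant question is only whether your sketch is sound on its own terms. Your two preliminary reductions are correct and genuinely sharpen the problem: since $\OBDC^2$ yields $\BDP$ (\Cref{fact_obdc}) and $\BDP+\MP$ recovers $\LEM$ (\Cref{fact_BDP_LEM}), any model of $\OBDC^2$ satisfies $\MP$ iff it satisfies $\LEM$; and since $\OBDC^2$ yields $\BCC$ while $\BCC+\CC_\Nat$ recovers $\CC$ (\Cref{fact_bcc}), such a model satisfies $\CC_\Nat$ iff it satisfies $\CC$. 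These are exactly the right targets, and they follow rigorously from results the paper does establish.

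The gap is in everything after that. Both separations are delegated to realisability models that do not currently exist: a \emph{constructive} quote operator over arbitrary types (the cited constructive quoting applies to $\Nat\sto\Nat$, and the known general quote lives in classical realisability, which validates the very $\LEM$ and $\MP$ you need to refute), and, worse, a quote operator coexisting with the coin-flip non-determinism of the model refuting $\CC_\Nat$ --- two mechanisms that pull in opposite directions, as you yourself note. There is also an unexamined risk that a quote strong enough to collapse every type to a countable one validates $\CC_\Nat$ outright (internal enumerability plus the ambient $\CC^\Delta$ is uncomfortably close to countable choice), which would destroy the second separation. So what you have is a correct reduction plus a plausible research programme, not a proof; the conjecture remains open, and your sketch should be presented as such rather than as an argument that closes it.
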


\section{Discussion}
\label{sec_discussion}

\subsection{Main Results}

In this paper, we have studied several logical decompositions of the $\DLS$ theorem over classical and constructive meta-theories.
We briefly summarise the main results as a base for comparison.
First, over a fully classical meta-theory, we have:
$$\CC_\Nat+\LEM~\vdash~\DLS\leftrightarrow \DC\leftrightarrow \BDC$$

This is the previously known equivalence to $\DC$ (\Cref{larry_LS_classical}), additionally refined by only using $\BDC$ as a blurred weakening of $\DC$ that is equivalent over $\CC_\Nat$ (\Cref{fact_bdc}).

Secondly, assuming just $\CC_\Nat$ in the meta-theory, we obtain:
$$\CC_\Nat~\vdash~\DLS\leftrightarrow \DC + \BDP + \BDDP$$

This explains which fragment of $\LEM$ is needed (\Cref{larry_blurred}), where $\BDP$ and $\BDDP$ independently cover the contribution of syntactic universal and existential quantification.
Again, given $\CC_\Nat$ in the background, $\DC$ could be replaced by any of its blurrings.

Lastly, in a fully constructive meta-theory, we observe:
$$\vdash~\DLS \leftrightarrow \DDC+\BCC+\BDP+\BDDP\leftrightarrow \BDC^2+\BDP+\BDDP$$

This unveils the individual fragments of $\DC$ and $\CC$ needed, namely $\DDC$ and $\BCC$, which together form $\BDC^2$ (\Cref{thm_full}).
Using $\OBDC^2$ that integrates $\BDP$ and $\BDDP$, we finally have:
$$\vdash~\DLS\leftrightarrow \OBDC^2$$

These decompositions provide a clear logical characterisation of the $\DLS$ theorem and the observed principles appear naturally:
same as the $\DLS$ theorem, they all in one way or another collapse arbitrary to countable cardinality.

\subsection{General Remarks}

The central theme governing the results in this paper is the idea of blurring.
Most directly, it appears in the weakening of the drinker paradoxes to hide information of classical existential quantification.
Thereby, the obtained hierarchies $\BDP^B$ and $\BDDP^B$ for different blurrings $B$ are natural generalisations of $\DP$ and $\DDP$ and we expect that already countable blurring $\BDP^\Nat$ and $\BDDP^\Nat$ has a constructively weaker status.
Relatedly, the blurred versions of choice axioms unveil interesting structure, for instance by explaining that $\DC$ without the contributions of $\CC$ states that arbitrary relations with some first-order expressible property must admit countable sub-relations of the same property, as exemplified by $\BDC$ and $\DDC$.

Our proof strategy to use variable environments to represent syntactic submodels seems to be an alternative to the usual strategies to either extend the signature~\cite{marker2006model} or the submodel as a subset of the original model~\cite{bodirsky_notes}.
After the extension process has reached a fixed point, we simply turn the obtained (blurred) Henkin environment into its induced syntactic submodel, where the Henkin property is reminiscent of the Tarski-Vaught test~\cite{simpson_notes} usually applied to check elementarity.
We are not aware of another proof following our strategy and deem it advantageous for our constructive analysis and particularly suitable for mechanisation.

Another point to mention is that we do not incorporate equality as a primitive of the syntax and thereby external cardinality of a model and its internal cardinality based on equivalence classes of first-order indistinguishability need not coincide.
This choice allows us to state the $\DLS$ theorem more generally, applying to all and not just infinite models, and subsumes the traditional presentation with equality, as the internal cardinality of a model is bounded by the external cardinality.
Connectedly, we use the wording for ``countability'' to include finite cardinality, such that we do not have to talk of ``at most countable'' models.

\subsection{Coq Mechanisation}

The Coq development accompanying this paper is based on and contributed to the Coq library of first-order logic~\cite{kirst2022library}.
This library provides the core definitions of syntax, deduction systems, and semantics, as well as a constructive completeness proof we build on for our first approximation of the $\DLS$ theorem (\Cref{fact_LS_compl}).
The handling of variables is done in the style of the Autosubst 2 framework~\cite{AutoSubst2}, employing parallel substitutions for de Bruijn indexed syntax and providing a normalisation tactic for substitutive expressions.
On top of that library, our development spans roughly 3,500 lines of code, with only around 300 needed for a self-contained proof of the $\DLS$ theorem.
The latter illustrates that our proof strategy based on variable environments instead of signature or model extension is indeed well-suited for computer mechanisation.

We are aware of a few other mechanisations of the $\DLS$ theorem.
In Isabelle/HOL, Blanchette and Popescu~\cite{blanchette2013mechanizing} give a classical and mostly automated proof of the limited strength of our \Cref{fact_LS_compl}, as by-product of a Henkin-style completeness proof.
Using Mizar, Caminati~\cite{caminati2010basic} also proves the weak form of the $\DLS$ theorem corresponding to our \Cref{fact_LS_compl}, again following the strategy factoring through a classical completeness proof.
Contained in the Lean mathematical library~\cite{The_mathlib_Community_2020} and contributed by Anderson is a classical proof of the $\DLS$ theorem in strong form, i.e.~providing an elementary submodel.
This proof relies on the full axiom of choice to obtain Skolem functions for arbitrary formulas.

\subsection{Future Work}
\label{sec_future}

For the purpose of this paper, we have focused on the case of countable signatures only.
As discussed by Esp\'indola~\cite{espindolalowenheim} and Karagila~\cite{karagila2014downward}, the classical equivalence of the $\DLS$ theorem to $\DC$ generalises to signatures of higher cardinality: for signatures of size $A$, one needs $\AC_A$ on top of $\DC$, which was not visible in the case $A:=\Nat$ since $\AC_\Nat$, that is $\CC$, happens to follow from $\DC$.
As anticipated in \Cref{sec_general}, we conjecture that, in our constructive setting, something similar can be observed, namely that we need the following assumptions: $\DDC$ as before, $\BDP^A$ and $\BDDP^A$ now blurred with respect to $A$, and, in replacement of $\BCC$, a blurred form of the general axiom of choice akin to $\BAC_A$.
In \Cref{fact_BAC}, we have already verified that a suitable variant of the $\DLS$ theorem at signature size $A$ implies $\BDP^A$, $\BDDP^A$, and $\BAC_{A,B}$ for all $B$, but whether they together in turn imply $\DLS$ in this or an alternative formulation is left for future work.
Especially, this proof would require a more conventional proof strategy since our trick to use variable environments, with $\Nat$ as domain, to represent submodels, now with $A$ as domain, is certainly not applicable.

Another interesting direction is to consider the upwards case of the Löwenheim-Skolem theorem, stating that every infinite model has an elementary extension at arbitrarily larger cardinality.
For this statement, contrarily to the downwards case, syntactic equality is crucial to classify the actual internal cardinality of the extended model.
While without this restriction, as we already mechanised the proof is rather trivial and fully constructive by just adding enough new elements, with the restriction,
the proof usually employs the compactness theorem to ensure the distinctness of newly added elements to increase the cardinality.
The compactness theorem, however, is known to not be constructive, leaving the constructive status of the the upwards Löwenheim-Skolem theorem to be investigated.

Finally, our working hypothesis regarding the status of the blurred logical principles is that neither of them collapses, i.e.~most crucially that $\BDP+\BDDP$ does not imply $\LEM$, that $\BCC$ does not imply $\CC$, that $\DDC$ does not imply $\BCC$, and that $\BDC$ does not imply $\BDC^2$.
Recall from \Cref{sec_separate} that most implications do not hold locally, i.e.~it is not the case that for instance $\BCC_A$ always implies $\CC_A$, as $\BCC_\Nat$ is provable while $\CC_\Nat$ is not.
Therefore, only the question of global implications remains of interest and to obtain full certainty, one has to construct separating models.
A promising approach as sketched in \Cref{sec_separate} is the use of realisability models, where the logical components are interpreted by computational means.
For instance, non-deterministic realisability allows to invalidate $\CC$ (and thus also $\DC$)~\cite{cohen2019effects} and there is hope that in this setting still $\BCC$ (and maybe even $\BDC$) are validated.
Moreover, based on classical realisability, Castro~\cite{CastroThesis} independently identifies and separates an instance of $\BCC$ he calls collection axiom.
Finally, realisability models incorporating quoting operations~\cite{pedrot2024upon} might be useful to separate $\BDP$ and $\BDDP$ from $\LEM$.
However, all sketches along these lines reported in \Cref{sec_separate} are far from formal proofs that would require careful translation of existing results to the concrete case of Coq's type theory CIC, ideally supported by computer mechanisation.

\section*{Acknowledgments}

We thank Valentin Blot, Yannick Forster, Hugo Herbelin, Felix Jahn, Julian Rosemann, and the anonymous reviewers for their helpful comments on drafts of this paper.
In addition, we thank Martin Baillon, F\'elix Castro, Thierry Coquand, Jonas Frey, Guillaume Geoffroy, Sam van Gool, Timothée Huneau, Jean-Louis Krivine, Jérémie Marquès, Alexandre Miquel, \'Etienne Miquey, Pierre-Marie P\'edrot, Quentin Schroeder, and Joshua Wrigley for multiple discussions about the blurred axioms in relation to their original forms.

\bibliographystyle{alphaurl}
\bibliography{biblio_LMCS}

\appendix

\newpage

\section{Overview of Logical Principles}
\label{app_defs}

Standard principles below the excluded middle:\label{defs_lem}
\begin{align*}
\LEM ~:=~& \forall p:\Prop.\,p\lor \neg p\\
\LPO ~:=~& \forall f:\Nat\sto\Bool.\,(\exists n.\, f\,n=\btrue)\lor (\forall x.\,f\,n=\bfalse)\\
\DP_A ~:=~& \forall P:A\sto\Prop.\,\exists x.\, P\,x \to \forall y.\, P\,y\\
\DDP_A ~:=~& \forall P:A\sto\Prop.\,\exists x.\, (\exists y.\,P\,y) \to P\,x\\
\IP_A ~:=~& \forall P:A\sto\Prop.\,\forall p:\Prop.\,(p\to \exists x.\,P\,x)\to \exists x.\,p\to P\,x\\
\KS~:=~&\forall p:\Prop.\exists f:\Nat\sto\Bool.\,p\leftrightarrow \exists n.\,f\,n=\btrue\\
\MP ~:=~&\forall f:\Nat\sto\Bool.\,\neg\neg(\exists n.\, f\,n=\btrue)\to \exists n.\, f\,n=\btrue
\end{align*}

\vspace{0.3cm}
\noindent
Generalised principles below the excluded middle:
\begin{align*}
	\GKS_A~:=~&\forall p:\Prop.\exists f:A\sto\Bool.\,p\leftrightarrow \exists x.\,f\,x=\btrue\\
	\GMP_A ~:=~& \forall f:A\sto\Bool.\,\neg\neg(\exists x.\, f\,x=\btrue)\to \exists x.\, f\,x=\btrue
\end{align*}

\vspace{0.3cm}
\noindent
Standard principles below the axiom of choice:\label{defs_ac}
\begin{align*}
\AC_{A,B} ~:=~& \forall R: A\sto B\sto\Prop.\, \total R \to \exists f:A\sto B. \forall x.\, R\,x\,(f\,x)\\
\DC_{A} ~:=~&\forall R:A\sto A \sto\Prop.\,\total R\to \exists f:\Nat\sto A.\forall n.\,R\,(f\,n)\,(f\,(n+1))\\
\CC_{A} ~:=~& \forall R: \Nat\sto A\sto\Prop.\, \total R \to \exists f:\Nat\sto A. \forall n.\, R\,n\,(f\,n)\\
\OAC_{A,B} ~:=~& \forall R: A\sto B\sto\Prop.  \exists f:A\sto B. \,\total R \to \forall x.\, R\,x\,(f\,x)
\end{align*}

\vspace{0.3cm}
\noindent
Blurred principles below the excluded middle:\label{defs_blem}
\begin{align*}
\BDP^B_A~:=~&\forall P:A\sto\Prop.\exists f:B\sto A.\,(\forall y.\,P\,(f\,y))\to \forall x.\, P\,x\\
\BDDP^B_A~:=~&\forall P:A\sto\Prop.\exists f:B\sto A.\,(\exists x.\, P\,x)\to\exists y.\,P\,(f\,y)\\
\BIP_A^B ~:=~ &\forall P:A\sto\Prop.\,\forall p:\Prop.\,(p\to \exists x.\,P\,x)\to \exists f:B\sto A.\,p\to \exists y.\,P\,(f\,y)
\end{align*}

\vspace{0.3cm}
\noindent
Blurred principles below the axiom of choice:\label{defs_bac}
\begin{align*}
	\BCC_A~:=~&\forall R:\Nat\sto A\sto \Prop.\,\total R \to \exists f:\Nat\sto A.\forall n.\exists m.\,R\,n\,(f\,m)\\
	\BDC_A~:=~&\forall R:A\sto A\sto\Prop.\, \total R \to \exists f:\Nat\sto A.\,\total {R\circ f}\\
	\BDC^2_A~:=~&\forall R:A^2\sto A\sto\Prop.\, \total R \to \exists f:\Nat\sto A.\,\total {R\circ f}\\
	\DDC_A~:=~&\forall R: A\sto A \sto \Prop.\, \directed R \to \exists f:\Nat\sto A.\, \directed{R\circ f}\\
	\OBDC^2_A~:=~&\forall R:A^2\sto A\sto\Prop. \exists f:\Nat\sto A.\,\total R \leftrightarrow\total {R\circ f}
\end{align*}

\vspace{0.3cm}
\noindent
Generalised blurred principles below the axiom of choice:
\begin{align*}
	\BDC_A^B~:=~&\forall R:A\sto A\sto\Prop.\, \total R \to \exists f:B\sto A.\,\total {R\circ f}\\
	\DDC_A^B~:=~&\forall R: A\sto A \sto \Prop.\, \directed R \to \exists f:B\sto A.\, \directed{R\circ f}\\
	\BAC_{A,B}~:=~&\forall R:A\sto B\sto \Prop.\,\total R \to \exists f:A\sto B.\forall x.\exists y.\,R\,x\,(f\,y)
\end{align*}

\newpage
\section{Connections of Logical Principles}
\label{app_imps}
See below for an overview of our main results regarding $\DLS$.
As before, solid arrows depict (presumably strict) implications while the dashed arrows depict combined equivalences.
Moreover, double arrows depict direct equivalences with potential side conditions placed next to the arrows.

% https://q.uiver.app/#q=WzAsNyxbMCwyLCJcXE9CRENeMiJdLFsxLDEsIlxcQkRQIl0sWzEsMywiXFxCRERQIl0sWzEsMiwiXFxCRENeMiJdLFszLDIsIlxcRERDK1xcQkNDIl0sWzQsMiwiXFxtYXRoc2Z7TFN9Il0sWzMsMCwiXFxtYXRoc2Z7REN9Il0sWzAsMV0sWzAsMl0sWzAsM10sWzAsNSwiIiwwLHsiY3VydmUiOjUsInN0eWxlIjp7InRhaWwiOnsibmFtZSI6ImFycm93aGVhZCJ9fX1dLFsyLDUsIiIsMCx7InN0eWxlIjp7ImJvZHkiOnsibmFtZSI6ImRhc2hlZCJ9fX1dLFsxLDUsIiIsMCx7InN0eWxlIjp7ImJvZHkiOnsibmFtZSI6ImRhc2hlZCJ9fX1dLFs1LDQsIiIsMSx7InN0eWxlIjp7InRhaWwiOnsibmFtZSI6ImFycm93aGVhZCJ9LCJib2R5Ijp7Im5hbWUiOiJkYXNoZWQifSwiaGVhZCI6eyJuYW1lIjoibm9uZSJ9fX1dLFs2LDUsIlxcQ0NfXFxOYXQgKyBcXExFTSIsMCx7ImxldmVsIjoyLCJzdHlsZSI6eyJ0YWlsIjp7Im5hbWUiOiJhcnJvd2hlYWQifX19XSxbNiw0XSxbMyw2LCJcXENDX1xcTmF0Il0sWzMsNCwiIiwxLHsic3R5bGUiOnsidGFpbCI6eyJuYW1lIjoiYXJyb3doZWFkIn19fV1d

\vspace{1cm}
\hspace{-1cm}
\adjustbox{scale=1.169}{
\begin{tikzcd}[sep=huge]
	&&& {\DC} \\
	& \BDP \\
	{\OBDC^2} & {\BDC^2} && {\DDC+\BCC} & {\DLS} \\
	& \BDDP
	\arrow[from=3-1, to=2-2]
	\arrow[from=3-1, to=4-2]
	\arrow[from=3-1, to=3-2]
	\arrow[curve={height=42pt}, Rightarrow, 2tail reversed, from=3-1, to=3-5]
	\arrow[dashed, from=4-2, to=3-5]
	\arrow[dashed, from=2-2, to=3-5]
	\arrow[dashed, tail reversed, no head, from=3-5, to=3-4]
	\arrow["{\CC_\Nat + \LEM}", Rightarrow, 2tail reversed, from=1-4, to=3-5]
	\arrow[from=1-4, to=3-4]
	\arrow["{\CC_\Nat}", Rightarrow, 2tail reversed, from=3-2, to=1-4]
	\arrow[Rightarrow, 2tail reversed, from=3-2, to=3-4]
\end{tikzcd}
}

\vspace{2cm}
For convenience, we also reproduce the diagrams from \Cref{sec_BDP,sec_BAC}.

\vspace{1cm}
\hspace{-1cm}
\adjustbox{scale=1.169}{
\begin{tikzcd}[column sep=3.15em,row sep=large]
	& {\LEM/ \DP/ \DDP/ \IP} \\
	\BDP & \LPO & \BDDP \\
	{\KS'} & \MP & \KS
	\arrow[from=1-2, to=2-2]
	\arrow[from=1-2, to=2-1]
	\arrow[from=1-2, to=2-3]
	\arrow[from=2-1, to=3-1]
	\arrow[from=2-2, to=3-2]
	\arrow[from=2-3, to=3-3]
	\arrow[""{name=0, anchor=center, inner sep=0}, dashed, no head, from=3-2, to=3-3]
	\arrow[""{name=1, anchor=center, inner sep=0}, dashed, no head, from=3-1, to=3-2]
	\arrow[curve={height=12pt}, shorten <=9pt, dashed, from=0, to=1-2]
	\arrow[curve={height=-12pt}, shorten <=9pt, dashed, from=1, to=1-2]
\end{tikzcd}

\begin{tikzcd}[column sep=3.15em,row sep=large]
	\DC && \CC \\
	{\BDC^2} & \BDC & \BCC \\
	\DDC
	\arrow[from=1-1, to=2-1]
	\arrow[from=2-1, to=3-1]
	\arrow[from=1-1, to=1-3]
	\arrow[from=1-3, to=2-3]
	\arrow[from=2-1, to=2-2]
	\arrow[from=2-2, to=2-3]
	\arrow[""{name=0, anchor=center, inner sep=0}, dashed, no head, from=3-1, to=2-3]
	\arrow[""{name=1, anchor=center, inner sep=0}, dashed, no head, from=2-1, to=1-3]
	\arrow[shorten <=5pt, dashed, from=0, to=2-1]
	\arrow[shorten <=6pt, dashed, from=1, to=1-1]
\end{tikzcd}

}
\end{document}